\newcommand{\twod}{\ensuremath{\ell_2^2}\xspace}
\newcommand{\threed}{\ensuremath{\ell_2^3}\xspace}
\newtheorem{theorem}{Theorem}
\newtheorem{lemma}{Lemma}
\newtheorem{corollary}{Corollary}
\begin{document}

\title{Hilbert Exclusion: Improved Metric  Search \\through Finite Isometric Embeddings}

\author[1]{Richard Connor}
\affil[1]{Department of Computer and Information Sciences,University of Strathclyde, Glasgow, G1 1XH, United Kingdom}
\author[2]{Franco Alberto Cardillo}
\author[2]{Lucia Vadicamo}
\author[2]{Fausto Rabitti}
\affil[2]{ISTI (Information Science and Technology Institute)\\National Research Council of Italy,
Via Moruzzi 1, 56124 Pisa, Italy}
\affil[1]{{\footnotesize richard.connor@strath.ac.uk}}
\affil[2]{\footnotesize  \{franco.alberto.cardillo, lucia.vadicamo, fausto.rabitti\}@isti.cnr.it}

\date{}                     
\setcounter{Maxaffil}{0}
\renewcommand\Affilfont{\itshape\small}



%
%
\maketitle

\begin{abstract}
Most research into similarity search in metric spaces relies upon the triangle inequality property. This property allows the space to be arranged according to relative distances to avoid searching some subspaces. We show that many common metric spaces, notably including those using Euclidean and Jensen-Shannon distances, also have a stronger property, sometimes called the four-point property: in essence, these spaces allow an isometric embedding of any four points in three-dimensional Euclidean space, as well as any three points in two-dimensional Euclidean space. 
In fact, we show that 
any space which is isometrically embeddable in Hilbert space has the stronger property. This property gives stronger geometric guarantees, and one in particular, which we name the Hilbert Exclusion property, allows any indexing mechanism which uses hyperplane partitioning to perform better.  One outcome of this observation is that a number of state-of-the-art indexing mechanisms over high dimensional spaces can be easily extended to give a significant increase in performance; furthermore,  the improvement given is greater in higher dimensions. This therefore leads to a significant improvement in the cost of metric search in these spaces.
\end{abstract}

\section{Introduction}

In the realm of similarity search, many metric indexing techniques are available. These  rely on the  metric properties of the distance function used, and in particular use the triangular inequality property in various ways to exclude parts of the space from a search for  values similar to a given query.

Any proper metric space $(U, d)$ is  isometrically 3-embeddable in two dimensional Euclidean space (\twod). That is, for any three objects within $(U,d)$, there exists a function mapping those objects into \twod which preserves the distances between them. This is  in fact a  corollary of the  metric properties of $d$.

In this paper we consider spaces with the stronger property of being  isometrically 4-embeddable in three dimensional Euclidean space (\threed). We show that these spaces include all those which have isometric embeddings in Hilbert space, notably including any space under Euclidean distance, as well as the proper metric forms of Jensen-Shannon, Triangular Discrimination and a novel form of Cosine distance.

Such spaces give stronger  geometric properties. All metric indexing currently relies on one (or  both) of two core principles: exclusion based on a bounding radius, or exclusion based on a  hyperplane partition, both of which can be explained in terms of their 3-embeddabilty property. Using the stronger 4-embeddability, we show that a greater degree of exclusion is possible, and that this exclusion degrades more slowly as higher dimensions are considered.

Our main result is very simple. Consider any four points $p_1, p_2, q$ and $s$ in a metric space $(U,d)$, where the intent is that $q$ is a query, $s$ is a solution to this query (i.e. $d(q,s) \le t$ for some real value $t$), and $p_1$ and $p_2$ are points within the space which have been previously used to structure the data.

 During the progress of a query evaluation, the distances $d(q,p_1)$ and $d(q,p_2)$ are evaluated. Assuming without loss of generality that $d(q,p_2) \le d(q,p_1)$, then a well known property used during search is
 
 \[\frac{d(q,p_1) - d(q,p_2 )}{2} > t \Rightarrow  d(s,p_1) > d(s,p_2)\]
 
 Here, we show  that for certain common classes of spaces
 
 \[\frac{d(q,p_1)^2 - d(q,p_2)^2}{2\,d(p_1,p_2)} > t \Rightarrow  d(s,p_1) > d(s,p_2)\] 
 
 Both properties can be used to avoid searching subspaces  where all elements  are known to be closer to $p_1$ than $p_2$. The second property however is strictly weaker, meaning that any indexing mechanism which uses the first can be made more efficient\footnote{The distance $d(p_1,p_2)$ can be evaluated as the index is built, not during the query.}.

The best performing index for general purposes is currently believed to be the distal SAT \cite{dSatSisap,dSatIS}, which uses a combination of pivot and hyperplane-based exclusion. For this structure, we show a significant performance increase for Euclidean and Jensen-Shannon spaces, especially in higher dimensions. This therefore gives, for these spaces, a new high performance benchmark for similarity search.

The rest of this article is structured as follows. Section \ref{section_context} gives a general context of metric search and finite isometric embedding; after basic definitions, it goes on to show how the essential mechanisms of metric search can be explained in terms of  finite embeddings.  Section \ref{sec_3Dembedding_outline} briefly shows, in outline, why better performance can be expected from a space which is 4-embeddable in \threed. Section  \ref{section_better_exclusion}  gives a formal definition of our new exclusion property for hyperplane partitioning, and proves its applicability to any space which is isometrically 4-embeddable in \threed. Section \ref{section_four_embeddable_spaces}  gives some background mathematics of Hilbert spaces, and shows the 4-embeddabilty property for three important metrics.
Section \ref{sec_performance_analysis} gives an analysis of the improvement, including  relative performance measurements for some metric index implementations which use hyperplane partitioning.
 Section \ref{section_increasing_dimensionality} shows how the new exclusion criterion degrades relatively less severely over higher dimensions than those currently used, and   Section \ref{sec_conclusions} summarises and outlines further possibilities.

\section{Background and Related Work}
\label{section_context}

\subsection{Similarity Search and Metric Indexing}To set the context, we are interested in searching a (large) finite  set of objects $S$ which is a subset of an infinite set $U$, where $(U, d)$ is a  metric space. The general requirement is to efficiently find members of $S$ which are similar to an arbitrary member of $U$, where the distance function $d$ gives  the only way by which any two objects may be compared. There are many important practical examples captured by this mathematical framework, see for example \cite{Chavez05,zezula2006similarity}.

For $(U, d)$ to be a metric space, the distance function $d: U\times U\to \mathbb{R}$ requires  to satisfy 
 	\begin{itemize}
 		\item Positivity: $\forall\, a, b \in U,\, d(a,b)\geq 0$ with equality if, and only if, $a=b$;
 		\item Symmetry: $\forall\, a, b \in U,\, d(a,b)=d(b,a)$; 
 		\item Triangle inequality: $\forall\, a, b, c \in U,\, d(a,c)\leq d(a,b)+d(b,c)$.
 	\end{itemize}
 Such spaces are typically searched with reference to a query object $q	 \in U$.
 A threshold search  for some threshold $t$, based on a query $q \in U$, has the solution set  $\{s \in S \,\, \text{such that}\,\, d(q,s) \le t\}$. 
 	
 Typically $S$ is too large to allow an exhaustive search.
 However such queries can often be performed efficiently by use of a \emph{metric index}, one of a large family of data structures which make use of the triangle inequality property in order to arrange the set of objects $S$ in such a way as to minimise the time required to retrieve the query result.
%
Efficiency is primarily achieved by avoiding unnecessary distance calculations, although the efficient use of memory hierarchies is also extremely important. Both of these are optimised by structuring the set based on relative distances of objects from each other, so that triangle inequality can be used to determine subsets which do not need to be exhaustively checked. Such avoidance is normally referred to as \emph{exclusion}.

For exact metric search, almost all indexing methods can be divided into those which at each exclusion possibility use  a single ``pivot" point to give radius-based exclusion, and those which use two reference points to give hyperplane-based exclusion. Many variants of each have been proposed, including many hybrids; \cite{Chavez:2001}, \cite{zezula2006similarity} give excellent surveys.
In general the best choice seems to depend on the particular context of metric and data.

Here we focus particularly on  mechanisms which use hyperplane-based exclusion.
The simplest such index structure  is the Generalised Hyperplane Tree  \cite{GHT}. Others include  the Monotonous Bisector Tree \cite{Noltemeier1992}, the Metric Index \cite{MIndex2011}, and the Spatial Approximation Tree \cite{SAT2002}. This last has various  derivatives, notably including the Dynamic SAT \cite{DSAT} and the Distal SAT \cite{dSatIS}, which includes a variant $\textit{SAT}_\textit{out}$ which is  believed to be, at time of writing, the most efficient known general-use indexing structure for performing exact search \cite{dSatIS}; therefore an significant improvement on this, as we show here, is a significant result.

\subsection{Finite Isometric Embeddings}

\label{section_finite_embeddings}

An  isometric embedding of one metric space $(V,d_v)$ in another $(W,d_w)$ can be achieved when there exists a mapping function $f:V\rightarrow W$ such that $d_v(x,y) = d_w(f(x),f(y))$, for all $x,y \in V$. A finite isometric embedding occurs whenever this property is true for any finite selection of $n$ points from $V$, in which case the terminology used is that $V$ is isometrically $n$-embeddable in $W$.

The first observation to be made in this context is that any metric space is isometrically $3$-embeddable in \twod. This is  apparent from the triangle inequality property of a proper metric, as illustrated in Figure \ref{fig_property_equivalence}.
In fact the two properties are  equivalent: for any semi-metric space%
\footnote{a space where triangle inequality is not guaranteed} $(V,d_v)$  which is isometrically $3$-embeddable in  \twod, triangle inequality also holds.

Much work was done on finite isometric embeddings in the 1930s, but it does not appear to have been a ``hot topic" since then.
Blumenthal \cite{blumenthal1933note} provides an excellent and concise summary of this work as it pertains to ours. He attributes our observation above, that any semi-metric space which is 3-embeddable in \twod is a  metric space, to Menger. He uses the phrase \emph{the four-point property} to mean a semi-metric space which is isometrically 4-embeddable in \threed.  Wilson \cite{wilson1932relation} shows various properties of such spaces, and Blumethal points out that results given by Wilson, when combined with work by Menger in \cite{menger_1}, generalise to show that some spaces have the \textit{$n$-point property} (i.e. any $n$ points can be isometrically embedded in $\ell_2^{n-1}$.) This is in fact a more  general result than our  Lemma \ref{lemma_hyperplane} which uses a more modern formulation for high dimensional Euclidean space.

The most important results in finite isometric embeddings from our perspective are given by Schoenberg and Blumethal. \cite{Schoenberg}  shows an initially surprising result that if a kernel function $K$ has certain simple properties, then it can be used to construct a metric space which is isometrically embeddable in a Hilbert space. Blumenthal \cite{blumenthal1953} shows that any space which is isometrically embeddable in a Hilbert space has the $n$-point property for every possible integer $n$. In combination these are extraordinarily strong from our perspective: for any kernel function $K$ with the correct properties, we can construct a proper metric space with the four-point property. We expand  on this observation in Section  \ref{section_four_embeddable_spaces}.

Although normally expressed in terms of the property of triangle inequality, the properties of a metric space that allow indexing can be equally well expressed in terms of the geometric guarantees afforded according to the 3-embeddability property in \twod. To set the context, we briefly explain the two main indexing principles in terms of this property.
\begin{figure}[t]
	\centering
	\includegraphics[trim=10mm 0mm 10mm 10mm, width=0.98\columnwidth]{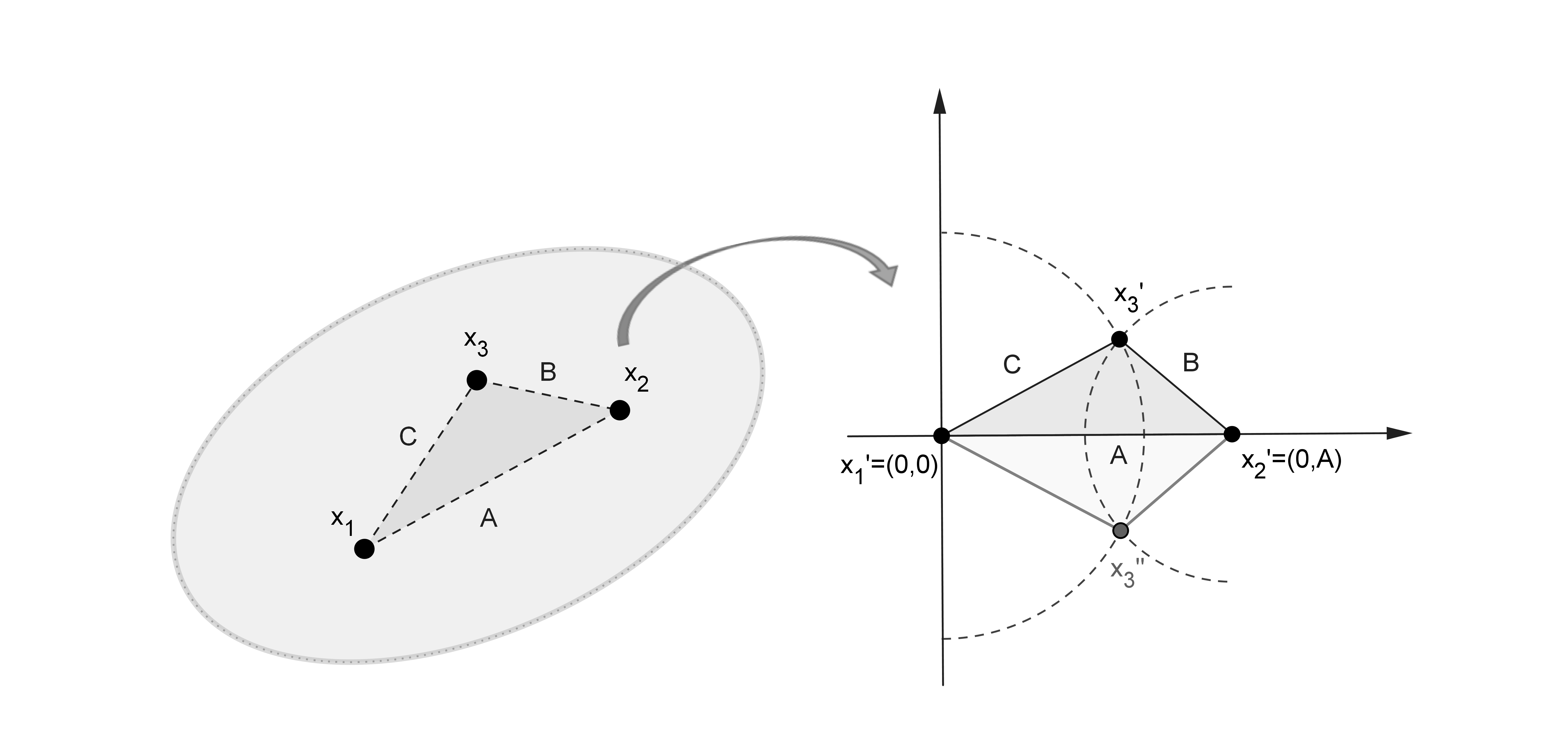}
	\caption{For any three points $x_1, x_2$ and $x_3$ whose distances satisfy the  triangle inequality property, a triangle can be constructed within 2D Euclidean space such that $x_1'$ is at the origin, $x_2'$ lies on the X-axis, and $x_3'$ is where the distances $B$ and $C$ intersect.}
\label{fig_property_equivalence}
\end{figure}

\subsection{Pivot-based indexing}
This technique entails the selection of a \emph{pivot} point $p \in S$, and the construction of one or more subsets of $S$ based on a fixed distance $m$  from $p$, e.g. $S_{in}$ where $s \in S_{in} \Rightarrow d(p,s) \le m$. For  a query $q$, $d(q,p)$ is calculated; if this is greater than $m + t$, for a query threshold $t$, then no element of $S$ within distance $t$ of $q$ can be within $S_{in}$ and every element of $S_{in}$ can therefore  be excluded from the search. Similarly, $S_{out}$ could be constructed such that $s \in S_{out} \Rightarrow d(p,s) > m$, in which case the elements of $S_{out}$ can be excluded if $d(q,p) \le m - t$.

The validity of the pivoting principle can be shown algebraically using the triangle inequality property of the metric, and many different mechanisms have been described using it \cite{Chavez:2001,zezula2006similarity}. They are often illustrated in the manner of Figure  \ref{figure_pivot_principle}; using such illustrations relies upon  isometric 3-embeddability   within \twod {} of any metric space, but should also be treated with care whenever more than three objects are considered, as  consideration of more than three points within the plane is invalid.

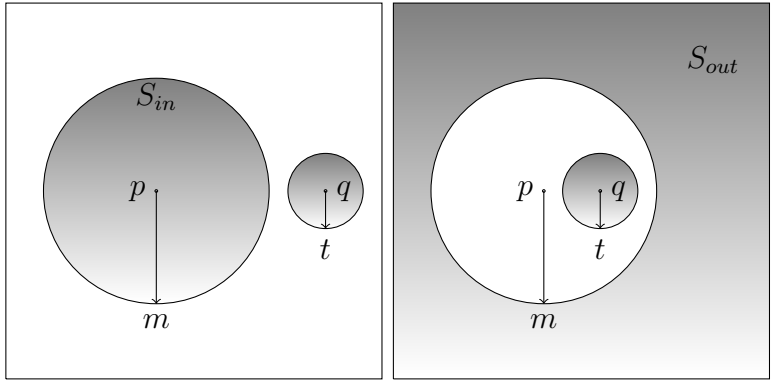
\begin{figure}[tbp]
\begin{center}
\begin{tikzpicture}[scale=0.5]

\draw (-5,-5) -- (5,-5) -- (5,5) -- (-5,5) -- (-5,-5);

\shade (-1,0) circle (3);
\draw (-1,0) circle (3);
\draw (-1,2.5) node {$S_{in}$};
\draw (-1,0) circle (1pt) node[anchor=east] {$p$};
\draw [->](-1,0) -- (-1,-3) node[anchor=north] {$m$};

\shade (3.5,0) circle (1);
\draw(3.5,0)circle(1);
\draw (3.5,0) circle (1pt) node[anchor=west] {$q$};
\draw [->](3.5,0) -- (3.5,-1) node[anchor=north] {$t$};

%
%
\end{tikzpicture}
\begin{tikzpicture}[scale=0.5]

\shade (-5,-5) -- (5,-5) -- (5,5) -- (-5,5) -- (-5,-5);
\draw (-5,-5) -- (5,-5) -- (5,5) -- (-5,5) -- (-5,-5);
\draw (3.5,3.5) node {$S_{out}$};

\fill[color=white] (-1,0) circle (3);
\draw (-1,0) circle (3);
\draw (-1,0) circle (1pt) node[anchor=east] {$p$};
\draw [->](-1,0) -- (-1,-3) node[anchor=north] {$m$};

\shade[color=white]  (0.5,0) circle (1);
\draw(0.5,0)circle(1);
\draw (0.5,0) circle (1pt) node[anchor=west] {$q$};
\draw [->](0.5,0) -- (0.5,-1) node[anchor=north] {$t$};

%
%
\end{tikzpicture}
\caption{Pivot-based exclusion illustrated by 3-embedding in \twod. Objects in $S_{in}$ are at most distance  $m$ from $p$, and objects in $S_{out}$ are at least $m$ from $p$. Given  $d(q,p) > m + t$, $S_{in}$ cannot contain a solution to the query. Similarily if  $d(q,p) < m - t$, $S_{out}$ cannot. Such diagrams should be treated with extreme care: for a general metric space, no more than three objects have a guarantee of isometric embedding within 2D Cartesian space. In these cases, it is necessary only to consider an embedding of the pivot, the query, and an arbitrary object within the solution space to see that the distance guarantee holds.}
\label{figure_pivot_principle}
\end{center}
\end{figure}



\subsection{Partition-based indexing}

In this type of indexing, two elements of $S$ are chosen, and the rest of $S$ is divided into two subsets according to which of these elements is closer.
Formally:
\begin{align*}
p_1, p_2 & \in S	\\
S_{p_1} & = \{s \in S - \{p_1, p_2\} \,, \, d(s,p_1) < d(s,p_2)\}	\\
S_{p_2} & = \{s \in S - \{p_1, p_2\} \,, \,d(s,p_1) \ge d(s,p_2)\}
\end{align*}

To evaluate a  query over  $q$, the distances $d(q,p_1)$ and $d(q,p_2)$ are first calculated. If $|d(q,p_1) - d(q,p_2)| > 2t$, then the subset associated with the point further from $q$ does not intersect with the solution set of the query and these values can be excluded from the search. Again, the exclusion condition is straightforward to derive algebraically from the triangle inequality property,  but can also be shown in terms of 3-embeddability within \twod.
%

\begin{figure}[tbp]
\begin{center}
\includegraphics[width=0.8\columnwidth]{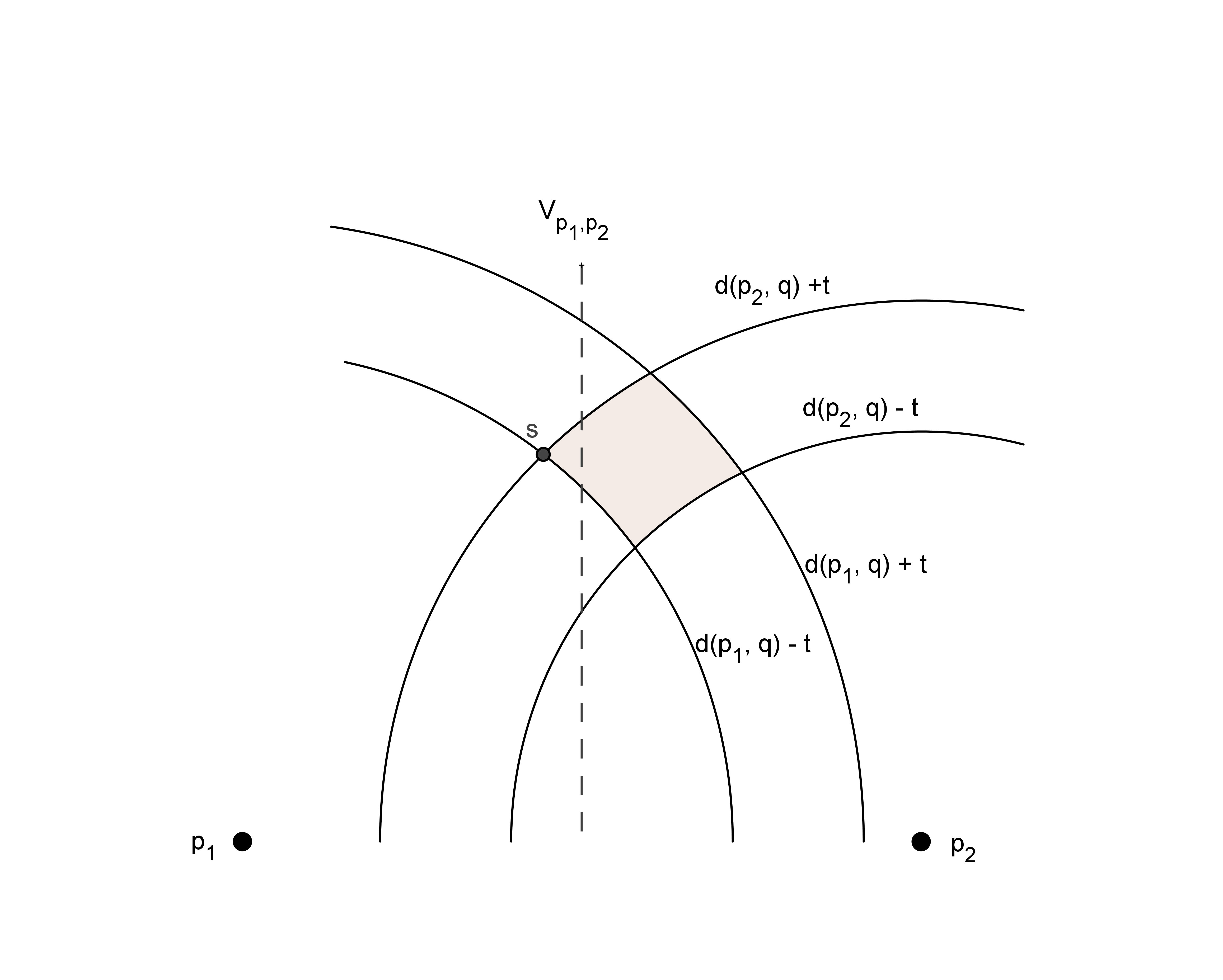}
%
%
%
%
%
%
%
%
%
%
%
\caption{The two pivot points and any solution to the query can  be isometrically  embedded in \twod.  The point $q$   cannot  be drawn in the same diagram. Given its distance from $p_1$ and $p_2$,  any solution in the original metric space must lie in the region bounded by  the four arcs shown in the \!{ \twod} projection. If the  point $s$ lies to the right of $V_{p_1,p_2}$ in \twod,  there is therefore no requirement to search to the left of the hyperplane in the original space. By symmetry, if $|d(q,p_1) - d(q,p_2)|>2t$, then half of the search space can be excluded. }
\label{figure_hyperplane_principle}
\end{center}
\end{figure}
%
%
%
%
%
%
%
%
Figure \ref{figure_hyperplane_principle} shows a graphical interpretation of this situation using the \twod {} embedding.  The three points chosen for illustration, relying on the 3-embedding property, are $p_1$, $p_2$, and an arbitrary solution  point to the query $q$. 
The two pivot points and any solution to the query can  be isometrically  embedded in \twod. In general the point $q$ may  not be, and  therefore cannot be drawn in the diagram

The line $V_{p_1,p_2}$  represents a boundary between $S_{p_1}$ and $S_{p_2}$  in the original space. If the whole of the region bounded by the four arcs lies to one side of this line, there is no requirement to search in the other part of the space. It can be seen from the diagram, if $q$ is closest to $p_2$,  that this occurs when $d(q,p_1) - t > d(q,p_2) + t$, i.e. $d(q,p_1) - d(q,p_2)  > 2t$. 
This illustration alone in fact is not quite convincing; it must be further observed that, for any two 3-embeddings where two of the points are the same (in this case $p_1$ and $p_2$), then embedding functions can be chosen that map those two points to the same two points in \twod (e.g. see Figure \ref{fig_property_equivalence}) thus preserving the semantics of the line $V_{p_1,p_2}$.

\section{Partition-based indexing with 4-embedding in \threed}
\label{sec_3Dembedding_outline}
We introduce the main result of this paper with simple   observation that, for spaces that are isometrically 4-embeddable in \threed, a tighter exclusion condition is possible for partitions.

Figure \ref{figure_queries} shows an example taken from a metric space 3-embedded in \twod, that is a standard metric space. Of the three queries, only $q_1$ and $q_2$ allow the partition on the far side of the hyperplane to be excluded, as for $q_3$ the exclusion condition is not met, even although the solution space appears geometrically separated from the right-hand side. 

This is because the  boundary defined by the  exclusion condition  is given by the locus of points $x$ such that { $d(x,p_2) - d(x,p_1) = 2t$} which  defines a  hyperbola focussed at $p_1$ and $p_2$, with semi-major axis $t$.
The minimum distance of this hyperbola from the  line $V_{p_1,p_2}$ is  $t$, but this occurs only on the line passing through $p_1$ and $p_2$. When considering this diagram in two dimensions, the relative distances among $p_1, p_2$ and any individual $q_i$ are significant, but as a general metric space  guarantees only 3-embeddability, the circles drawn around the queries are meaningless with respect to the original space.

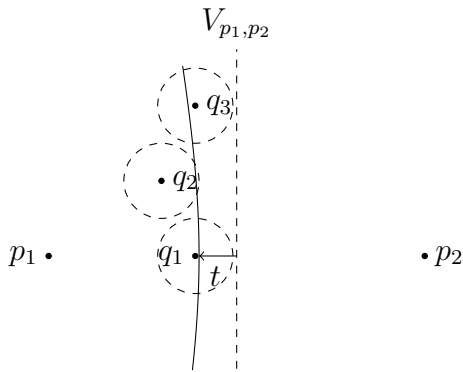
\begin{figure}[tbp]
	\begin{center}
		
		\begin{tikzpicture}[scale=0.5]
		
		\filldraw (-5,0) circle (2pt) node[anchor=east] {$p_1$};
		\filldraw (5,0) circle (2pt) node[anchor=west] {$p_2$};
		\filldraw (-1.1, 0) circle (2pt) node[anchor= east] {$q_1$};
		\filldraw (-2,2) circle (2pt) node[anchor=west] {$q_2$};
		\filldraw (-1.1,4) circle (2pt) node[anchor=west] {$q_3$};
		
		\draw[dashed]  (-0.1,4)arc(0:360:1.0);
		\draw[dashed] (-1,2)arc(0:360:1.0);
		\draw[dashed] (-0.1,0)arc(0:360:1.0);

		\draw[dashed] (0,-3) -- (0,5.5) node[anchor= south] {$V_{p_1,p_2}$};
		
		\draw[->] (0,0)-- (-1,0)  node[anchor=north west] {$t$};
		
		\draw
		(-1,0) --
		(-1.00006797710266,0.0571227785479771) --
		(-1.00027196387189,0.114263034245484) --
		(-1.00061212680032,0.171438266430466) --
		(-1.00108874373915,0.228666018878084) --
		(-1.00170220444469,0.285963902170542) --
		(-1.00245301134638,0.343349616249187) --
		(-1.00334178053901,0.400840973210953) --
		(-1.00436924300282,0.458455920412389) --
		(-1.00553624605596,0.516212563945954) --
		(-1.00684375504477,0.574129192555036) --
		(-1.00829285527797,0.632224302056251) --
		(-1.00988475421220,0.690516620339992) --
		(-1.01162078389706,0.749025133023020) --
		(-1.01350240368889,0.807769109830058) --
		(-1.01553120324369,0.866768131784914) --
		(-1.01770890580079,0.926042119295686) --
		(-1.02003737176986,0.985611361223092) --
		(-1.02251860263561,1.04549654502590) --
		(-1.02515474519540,1.10571878808296) --
		(-1.02794809614705,1.16629967029751) --
		(-1.03090110704520,1.22726126809589) --
		(-1.03401638964688,1.28862618994062) --
		(-1.03729672166831,1.35041761348549) --
		(-1.04074505297735,1.41265932450956) --
		(-1.04436451224807,1.47537575777682) --
		(-1.04815841410634,1.53859203997880) --
		(-1.05213026679792,1.60233403492978) --
		(-1.05628378041334,1.66662839119680) --
		(-1.06062287570706,1.73150259236157) --
		(-1.06515169355160,1.79698501012670) --
		(-1.06987460507113,1.86310496049646) --
		(-1.07479622250307,1.92989276328106) --
		(-1.07992141084048,1.99737980519499) --
		(-1.08525530031321,2.06559860684252) --
		(-1.09080329977064,2.13458289390981) --
		(-1.09657111103533,2.20436767291081) --
		(-1.10256474430270,2.27498931186559) --
		(-1.10879053466955,2.34648562632435) --
		(-1.11525515988180,2.41889597118840) --
		(-1.12196565940069,2.49226133882190) --
		(-1.12892945489628,2.56662446399506) --
		(-1.13615437228783,2.64202993625172) --
		(-1.14364866546261,2.71852432035209) --
		(-1.15142104181772,2.79615628550599) --
		(-1.15948068978455,2.87497674418393) --
		(-1.16783730851157,2.95503900137331) --
		(-1.17650113989978,3.03639891523681) --
		(-1.18548300320545,3.11911507023009) --
		(-1.19479433244796,3.20324896384820) --
		(-1.20444721688588,3.28886520829594) --
		(-1.21445444485385,3.37603174851872) --
		(-1.22482955128492,3.46482009818974) --
		(-1.23558686927985,3.55530559542812) --
		(-1.24674158612600,3.64756768022521) --
		(-1.25830980421540,3.74169019578460) --
		(-1.27030860736411,3.83776171624039) --
		(-1.28275613309524,3.93587590351153) --
		(-1.29567165151588,4.03613189638366) --
		(-1.30907565149591,4.13863473528970) --
		(-1.32298993494471,4.24349582669289) --
		(-1.33743772008333,4.35083345147120) --
		(-1.35244375472478,4.46077332226798) --
		(-1.36803444070809,4.57344919542393) --
		(-1.38423797078422,4.68900354385230) --
		(-1.40108447942774,4.80758829808036) --
		(-1.41860620925103,4.92936566367460) --
		(-1.43683769493288,5.05450902441780) ;
		
		\draw
		(-1,-0) --
		(-1.00006797710266,-0.0571227785479771) --
		(-1.00027196387189,-0.114263034245484) --
		(-1.00061212680032,-0.171438266430466) --
		(-1.00108874373915,-0.228666018878084) --
		(-1.00170220444469,-0.285963902170542) --
		(-1.00245301134638,-0.343349616249187) --
		(-1.00334178053901,-0.400840973210953) --
		(-1.00436924300282,-0.458455920412389) --
		(-1.00553624605596,-0.516212563945954) --
		(-1.00684375504477,-0.574129192555036) --
		(-1.00829285527797,-0.632224302056251) --
		(-1.00988475421220,-0.690516620339992) --
		(-1.01162078389706,-0.749025133023020) --
		(-1.01350240368889,-0.807769109830058) --
		(-1.01553120324369,-0.866768131784914) --
		(-1.01770890580079,-0.926042119295686) --
		(-1.02003737176986,-0.985611361223092) --
		(-1.02251860263561,-1.04549654502590) --
		(-1.02515474519540,-1.10571878808296) --
		(-1.02794809614705,-1.16629967029751) --
		(-1.03090110704520,-1.22726126809589) --
		(-1.03401638964688,-1.28862618994062) --
		(-1.03729672166831,-1.35041761348549) --
		(-1.04074505297735,-1.41265932450956) --
		(-1.04436451224807,-1.47537575777682) --
		(-1.04815841410634,-1.53859203997880) --
		(-1.05213026679792,-1.60233403492978) --
		(-1.05628378041334,-1.66662839119680) --
		(-1.06062287570706,-1.73150259236157) --
		(-1.06515169355160,-1.79698501012670) --
		(-1.06987460507113,-1.86310496049646) --
		(-1.07479622250307,-1.92989276328106) --
		(-1.07992141084048,-1.99737980519499) --
		(-1.08525530031321,-2.06559860684252) --
		(-1.09080329977064,-2.13458289390981) --
		(-1.09657111103533,-2.20436767291081) --
		(-1.10256474430270,-2.27498931186559) --
		(-1.10879053466955,-2.34648562632435) --
		(-1.11525515988180,-2.41889597118840) --
		(-1.12196565940069,-2.49226133882190) --
		(-1.12892945489628,-2.56662446399506) --
		(-1.13615437228783,-2.64202993625172) --
		(-1.14364866546261,-2.71852432035209) --
		(-1.15142104181772,-2.79615628550599) --
		(-1.15948068978455,-2.87497674418393) --
		(-1.16783730851157,-2.95503900137331) --
		(-1.17650113989978,-3.03639891523681);
		
		\end{tikzpicture}
		
		\caption{Three queries, $q_1, q_2$ and $q_3$, each with threshold $t$, on the left side of the  boundary $V_{p_1,p_2}$. Since $d(p_2,q_3) - d( p_1,q_3) < 2t$, $S_{p_2}$ cannot be excluded from the search on $q_3$. ($p_1 = (-5,0), p_2=(5,0), q_3 = (-1.1,4), t=1$). The hyperbola curve represents all possible points $x\in S_{p_1}$ such that $d(p_2,x) - d( p_1,x)= 2t$, i.e. the boundary of the exclusion condition.}
		\label{figure_queries}
	\end{center}
\end{figure}
\begin{figure}[tbp]
	\begin{center}	
{\includegraphics[width=0.76\columnwidth]{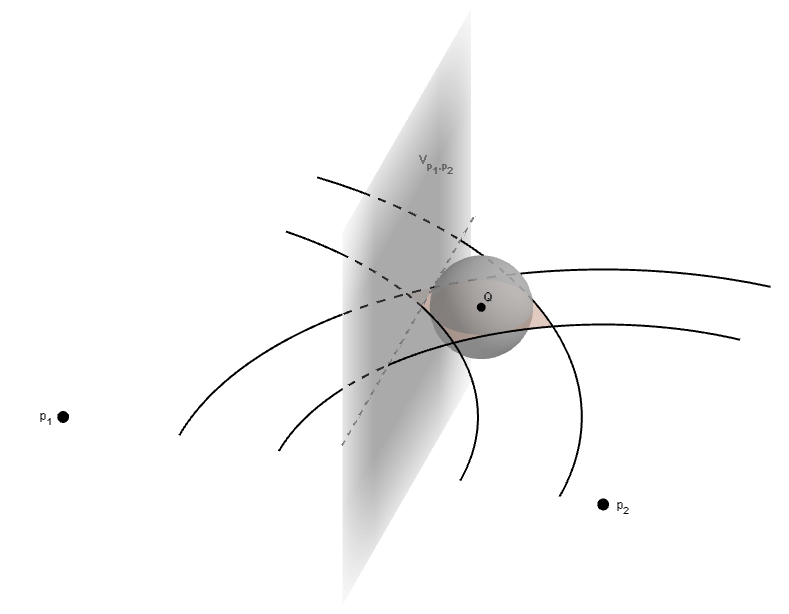}}
%
%
%
%
%
%
%
		\caption{Four points ($p_1,p_2,q$ and $s,  \text{  s.t.  } d(q,s) \le t$) in \threed. For fixed $p_1, p_2$ and $q$, any solution to the query lies within the sphere centred around $q$ and cannot lie within $S_{p_1}$, even although $d(q,p_1) - d(q,p_2) < 2t$. Note that $V_{p_1,p_2}$ in the figure now represents the hyperplane that divides the space into two subspaces: objects nearer to $p_1$ belonging to the left subspace and objects nearer to $p_2$ to the right.
}
		\label{figure_4_embeddable}
	\end{center}
\end{figure}

Consider now Figure \ref{figure_4_embeddable}, which shows the same situation  but relying on a 4-embeddability in \threed. Here 
the relative distances among any four points can be safely considered: in this case $p_1, p_2, q$, and any solution to $q$. The plane on which the diagram is drawn is that containing $p_1, p_2$ and $q$, and therefore the locus of any solution to $q$ consists of a sphere, radius $t$, centred around $q$.

It is clear from this diagram, in comparison with  Figure \ref{figure_hyperplane_principle}, that a more useful exclusion condition can be used: whenever the distance between $q$ and  $V_{p_1,p_2}$ is greater than $t$, $S_{p_1}$ does not require to be searched.  Other than the single point on the line through  $p_1$ and $p_2$ this distance is always strictly less than the nearest point on the corresponding hyperbola, and thus more exclusions are always possible.

Figure \ref{fig_3D_illustration} gives an illustration of the two boundary conditions in \threed.
It can be seen that our new exclusion condition is   weaker than the normal, hyperbolic, condition; in this sense weaker implies better, as it allows more queries to exclude the opposing semispace from further consideration. For discussion in the rest of the paper, we refer to the new exclusion condition as \emph{Hilbert Exclusion}, and the former condition as \emph{Hyperbolic Exclusion}. We proceed with a formal definition and proof of correctness of Hilbert Exclusion.

\begin{figure}[tbp]
\centering
\includegraphics[width=0.7\columnwidth]{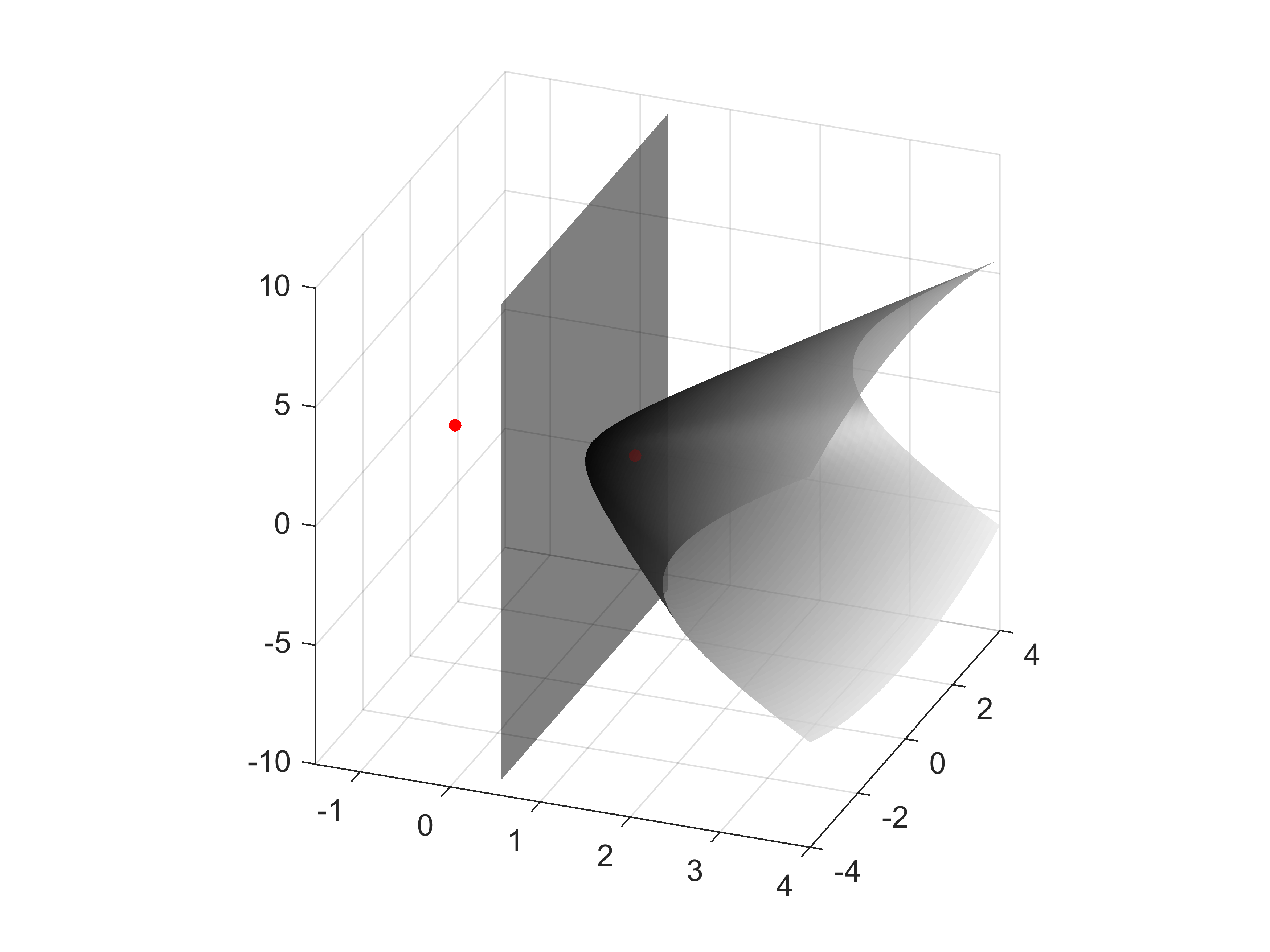}
\caption{This illustration shows the geometric principle behind the new exclusion condition, which can be applied to any metric space which is isometrically 4-embeddable in 3D Euclidean space. Here pivots are placed at $(-1,0,0)$ and $(1,0,0)$, the threshold selected is $0.5$. The surfaces drawn represent the boundaries of the two exclusion conditions we now refer to as \emph{Hilbert Exclusion} and \emph{Hyperbolic Exclusion} }
\label{fig_3D_illustration}
\end{figure}
\section{The Hilbert Exclusion Condition}
\label{section_better_exclusion}

\begin{theorem}
	\label{theorem_2d}
	Consider any three points $p_1, p_2, q \in \,$\threed with $d(q,p_2) < d(q,p_1)$. Then the condition
	\begin{equation}
		\dfrac{d(q,p_1)^2 - d(q ,p_2)^2}{2\,d(p_1,p_2)} > t
	\end{equation}
	
	implies that $d(s,p_2) < d(s,p_1)$ for all $s$ s.t. $d(q,s) \le t$.
	\end{theorem}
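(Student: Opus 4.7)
The plan is to identify the expression $\frac{d(q,p_1)^2 - d(q,p_2)^2}{2\,d(p_1,p_2)}$ with the signed perpendicular distance from $q$ to the bisecting hyperplane $V_{p_1,p_2} = \{x \in \threed : \|x-p_1\| = \|x-p_2\|\}$, taking positive sign on the $p_2$-side. Once that identification is in hand, the theorem reduces to the elementary Euclidean fact that a closed ball of radius $t$ lies entirely in one open half-space whenever its centre sits at perpendicular distance strictly greater than $t$ from the bounding hyperplane.

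First I would expand the squared distances in Cartesian coordinates to obtain
\[\|q-p_1\|^2 - \|q-p_2\|^2 = 2\langle q,\, p_2-p_1\rangle + \|p_1\|^2 - \|p_2\|^2.\]
Dividing by $2\|p_2-p_1\|$ yields precisely the standard signed-distance formula for $q$ and the affine hyperplane $2\langle x,\, p_2-p_1\rangle = \|p_2\|^2 - \|p_1\|^2$, whose unit normal is $n = (p_2-p_1)/\|p_2-p_1\|$ directed from $p_1$ toward $p_2$. The hypothesis $d(q,p_2) < d(q,p_1)$ guarantees this signed distance is positive, so $q$ lies strictly on the $p_2$-side of $V_{p_1,p_2}$.

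Writing $\delta$ for this signed distance and assuming $\delta > t$, for any $s \in \threed$ with $d(q,s) \le t$ the signed distance from $s$ to $V_{p_1,p_2}$ along $n$ satisfies
\[\delta + \langle s-q,\, n\rangle \;\ge\; \delta - \|s-q\| \;\ge\; \delta - t \;>\; 0\]
by Cauchy--Schwarz with $\|n\|=1$. Hence $s$ is also on the $p_2$-side, giving $d(s,p_2) < d(s,p_1)$ as required. I do not anticipate any real obstacle: the argument is a coordinate-free rewriting of the standard perpendicular-bisector identity combined with a one-line Cauchy--Schwarz bound, and the only point worth flagging is that the strict inequality in the hypothesis is exactly what is needed to yield the strict conclusion (and, via isometric 4-embeddability in $\threed$, to carry the result over to the general spaces targeted in the remainder of the paper).
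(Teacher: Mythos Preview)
Your proposal is correct and follows essentially the same approach as the paper: identify the quantity $\dfrac{d(q,p_1)^2 - d(q,p_2)^2}{2\,d(p_1,p_2)}$ with the (signed) distance from $q$ to the perpendicular bisector hyperplane $V_{p_1,p_2}$, and then observe that the ball of radius $t$ about $q$ stays on the $p_2$-side. Your write-up is slightly more explicit than the paper's (you spell out the Cartesian expansion and the Cauchy--Schwarz step, whereas the paper just quotes the point-to-plane distance formula and asserts the ball containment), but the underlying argument is identical.
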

\begin{proof}

	It is sufficient to prove that the distance between the point $q$ and the plane  $V_{p_1,p_2}$
	is greater that $t$. In this case, $d(s,p_2) < d(s,p_1)$ for all $s$ s.t. $d(q,s) \le t$.
		
		 The equation of the plane  $V_{p_1,p_2}$ can be written as the scalar product $(p_2-p_1)\cdot (x-\frac{(p_2+p_1)}{2})=0$, and so its distance from $q$ is given by
		\[
		 \text{dist}(q,V_{p_1,p_2}) = \left	| \left(q- \frac{(p_2+p_1)}{2}\right)\cdot \frac{(p_2-p_1)}{\|p_2-p_1\|_2}\right|
		=\dfrac{d(q,p_1)^2 - d(q ,p_2)^2}{2\,d(p_1,p_2)}
		\]
		Therefore if  $\text{dist}(q,V_{p_1,p_2}) > t$, any  point within distance $t$ of $q$ is closer to $p_2$ than to $p_1$
%
%
%
%
%
%
%
\end{proof}

The practical application of this theorem is in search indexes which partition the search space. The exclusion condition
\[\frac{d(q,p_1)^2 - d(q,p_2)^2}{2\,d(p_1,p_2)} > t\]
can be used in place of
\[\frac{d(q,p_1) - d(q,p_2 )}{2} > t\]
in order to exclude any subspace which is known to be closer to $p_1$ than to $p_2$.
The important point in our context is that the first condition is weaker than the second%
\footnote{A simple proof is given in Appendix \ref{section_appendix_proof_1}.}%
, and therefore will always result in more exclusions being made.

		
%
%
%
%
%
%
%
%
%
%
%
%
%
%

\begin{theorem}
	For any metric space $(U, d)$, and for any three points $p_1, p_2, q \in U$, the exclusion condition of Theorem \ref{theorem_2d} holds if $(U, d)$ is isometrically $4$-embeddable in \threed.
\end{theorem}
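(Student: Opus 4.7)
The plan is to reduce the statement directly to Theorem \ref{theorem_2d} via the hypothesis of 4-embeddability, using the fact that such embeddings preserve all pairwise distances.

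First, I would fix an arbitrary point $s \in U$ with $d(q,s) \le t$, so that we now have four specific points $p_1, p_2, q, s$ in $U$. Since $(U,d)$ is isometrically 4-embeddable in \threed, there exists an embedding $f$ (depending on these four points) sending them to $p_1', p_2', q', s' \in$ \threed with $d_2(f(x),f(y)) = d(x,y)$ for every pair $x,y$ among the four. In particular, the three quantities $d(q,p_1)$, $d(q,p_2)$, and $d(p_1,p_2)$ agree with their Euclidean counterparts, so the hypothesis $\frac{d(q,p_1)^2 - d(q,p_2)^2}{2\,d(p_1,p_2)} > t$ translates verbatim into the same inequality for $p_1', p_2', q'$ in \threed.

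Next I would invoke Theorem \ref{theorem_2d} on the three Euclidean images $p_1', p_2', q'$. Since $d_2(q',s') = d(q,s) \le t$, that theorem gives $d_2(s',p_2') < d_2(s',p_1')$. Pulling back through the isometry yields $d(s,p_2) < d(s,p_1)$, which is precisely the desired conclusion for this $s$. As $s$ was arbitrary among solutions to the query, the exclusion condition holds.

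The one subtle point worth flagging, rather than a genuine obstacle, is that the embedding $f$ depends on the chosen $s$: different solution points may require different isometric copies of $\{p_1,p_2,q,s\}$ in \threed. This is fine because the conclusion $d(s,p_2) < d(s,p_1)$ is a statement purely about pairwise distances in $U$, and each individual four-point embedding is all that is needed to derive it. No single global embedding of $\{p_1,p_2,q\} \cup \{s : d(q,s)\le t\}$ into \threed is required — and indeed no such embedding typically exists — so 4-embeddability is exactly the right hypothesis.
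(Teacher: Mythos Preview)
Your proposal is correct and follows essentially the same approach as the paper: fix an arbitrary solution $s$, use 4-embeddability to map $\{p_1,p_2,q,s\}$ isometrically into \threed, apply Theorem~\ref{theorem_2d} there, and pull the inequality $d(s,p_2)<d(s,p_1)$ back via the distance-preserving map. The paper also explicitly notes the same subtlety you flag, namely that the embedding $f$ may differ for each $s$ but only its existence is needed.
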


\begin{proof}
	Let $(U,d)$ be a metric space isometrically 4-embeddable in \threed.  
	Let $t$ be a real positive number and  $p_1, p_2,q \in U$ be three points such that   $d(q,p_2) < d(q,p_1)$ and 
	\begin{equation}\label{thcond}
		\dfrac{d(q,p_1)^2 - d(q ,p_2)^2}{2\,d(p_1,p_2)} > t .
	\end{equation}

	
	For any $s\in U$ such that $d(q,s)\leq t$ we want to prove that $d(s,p_2)<d(s,p_1)$.
	Since $(U,d)$ is isometrically 4-embeddable in \threed, there exists a function 
	$f:(U,d)\to\ell^3_2$
	which preserves all the six distances: 
	\begin{align}
		&	\|f(p_1)-f(p_2)\|_2=d(p_1,p_2)	\label{eq1} \\ 
		&	\|f(q)-f(p_1)\|_2=d(q,p_1)  \label{eq2}\\
		& 	\|f(q)-f(p_2)\|_2=d(q,p_2) \label{eq3}\\
		& 	\|f(s)-f(q)\|_2=d(s,q)\leq t \label{eq4}\\
		&\|f(s)-f(p_1)\|_2=d(s,p_1)\\
		&\|f(s)-f(p_2)\|_2=d(s,p_2).
	\end{align}
	Equations \eqref{eq1}-\eqref{eq4} together with equation \eqref{thcond} imply that points $\{f(p_1)$, $f(p_2)$, $f(q), f(s)\} \in \,$ \!{\threed} satisfy the exclusion condition of Theorem 1.
	Thus,  $f(s)$ is closer to $f(p_2)$ than to $f(p_1)$, i.e., $\|f(s)-f(p_1)\|_2>\|f(s)-f(p_2)\|_2$. This proves also that $s$ is closer to $p_2$ than to $p_1$, in fact
	\begin{equation*}
		d(s,p_1)=\|f(s)-f(p_1)\|_2>\|f(s)-f(p_2)\|_2=d(s,p_2).
	\end{equation*}
\end{proof}

Note that, for any solution $s$ in $U$, a different mapping function $f$ may be required, however the only importance of this function is that, for any four points, it exists: there is no requirement to identify it.

\section{Vector Spaces Isometrically  4-Embeddable in \threed}
\label{section_four_embeddable_spaces}
\subsection{$\ell_2^n$ Space}

Euclidean distance applied over many-dimensional data is probably the most common of metric searches. In these cases, we have an immediate result:

\begin{theorem}
Any $n$-dimensional Euclidean space  (i.e. an $\ell_2^n$ space, for any $n$) is 4-embeddable in \threed
\end{theorem}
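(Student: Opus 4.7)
The plan is to exploit the elementary fact that any four points in a Euclidean space span an affine subspace of dimension at most three, and then to express the isometric embedding explicitly by choosing coordinates with respect to an orthonormal basis of that subspace.

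First I would fix an arbitrary selection of four points $x_1, x_2, x_3, x_4 \in \ell_2^n$ and consider the three difference vectors $v_i = x_{i+1} - x_1$ for $i = 1,2,3$. These vectors span a linear subspace $W \subseteq \mathbb{R}^n$ of dimension at most $3$. Using Gram--Schmidt (or any standard linear-algebra procedure) I would produce an orthonormal basis $\{e_1, e_2, e_3\}$ of a $3$-dimensional subspace containing $W$; if $\dim W < 3$, I simply extend any orthonormal basis of $W$ to three vectors by adjoining unit vectors from the orthogonal complement. Each $v_i$ then has an expansion $v_i = \sum_{j=1}^{3} \langle v_i, e_j \rangle e_j$, because $v_i \in W \subseteq \operatorname{span}\{e_1,e_2,e_3\}$.

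Next I would define the embedding $f: \{x_1, x_2, x_3, x_4\} \to \ell_2^3$ by
\begin{equation*}
f(x_1) = (0,0,0), \qquad f(x_{i+1}) = \bigl(\langle v_i, e_1\rangle,\, \langle v_i, e_2\rangle,\, \langle v_i, e_3\rangle\bigr) \text{ for } i=1,2,3.
\end{equation*}
To verify isometry, note that for any indices $j,k$, the vector $x_j - x_k$ lies in $W$, and hence in $\operatorname{span}\{e_1,e_2,e_3\}$. Parseval's identity for an orthonormal basis then gives
\begin{equation*}
\|x_j - x_k\|_2^2 = \sum_{\ell=1}^{3} \langle x_j - x_k, e_\ell\rangle^2 = \|f(x_j) - f(x_k)\|_2^2,
\end{equation*}
so all six pairwise distances are preserved.

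There is no real obstacle here: the entire argument is the standard observation that $k$ points in a Euclidean space lie in an affine subspace of dimension at most $k-1$, and any such Euclidean subspace is itself isometric to $\ell_2^{k-1}$. The only minor care required is the degenerate case where the four points are affinely dependent (for instance, collinear or coplanar), which is handled transparently by padding the orthonormal basis out to three vectors; the extra coordinates of $f(x_i)$ are then zero and the distance computation above remains valid.
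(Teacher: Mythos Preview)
Your proof is correct and takes essentially the same approach as the paper: both rest on the observation that any four points in $\ell_2^n$ lie in an affine subspace of dimension at most three, which is itself isometric to $\ell_2^3$. The paper simply cites this as a textbook hyperplane lemma, whereas you unpack it explicitly via Gram--Schmidt and Parseval; the underlying idea is identical.
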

\begin{lemma}
\label{lemma_hyperplane}
In n dimensions, precisely one k-dimensional hyperplane passes through any $(k + 1)$ points that do not lie in a $(k - 1)$-dimensional hyperplane.%
\footnote{If the points are coplanar, an infinity of such hyperplanes exist; the important point for our purposes is only that at least one such hyperplane exists.} 
Moreover,  a k-dimensional hyperplane  can be regarded as a k-dimensional space in its own right. (See for example \cite{kolmogorov_textbook}, Chapter 7.)
\end{lemma}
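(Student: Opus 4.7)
The plan is to reduce the statement to a standard fact about affine subspaces of $\mathbb{R}^n$, proved by passing from the $(k+1)$ given points to a family of $k$ displacement vectors. Fix points $x_0, x_1, \ldots, x_k \in \mathbb{R}^n$ satisfying the hypothesis, and define $v_i = x_i - x_0$ for $i = 1, \ldots, k$. The first step is to recast the non-degeneracy hypothesis in linear-algebraic terms: the $(k+1)$ points fail to lie in a $(k-1)$-dimensional hyperplane if and only if $v_1, \ldots, v_k$ are linearly independent. Indeed, if they were dependent, their span $W$ would have dimension at most $k-1$, whence $x_0 + W$ would be a $(k-1)$-dimensional hyperplane containing all the $x_i$, contradicting the hypothesis.

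For existence, I would take $V = \operatorname{span}(v_1, \ldots, v_k)$ and set $H = x_0 + V$. By the preceding step $\dim V = k$, so $H$ is a $k$-dimensional affine subspace of $\mathbb{R}^n$, and by construction $x_i = x_0 + v_i \in H$ for each $i$. For uniqueness, suppose $H'$ is any $k$-dimensional hyperplane with $x_0, \ldots, x_k \in H'$. Translating by $-x_0$, the linear subspace $H' - x_0$ must contain each $v_i$, hence the whole span $V$; by equality of dimensions, $H' - x_0 = V$, and therefore $H' = H$.

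For the ``moreover'' clause, I would exhibit an explicit isometry between $H$ and $\ell_2^k$. Choose an orthonormal basis $e_1, \ldots, e_k$ of $V$ (Gram-Schmidt on $v_1, \ldots, v_k$), and define $\varphi \colon H \to \mathbb{R}^k$ by $\varphi\bigl(x_0 + \sum_{i=1}^{k} \alpha_i e_i\bigr) = (\alpha_1, \ldots, \alpha_k)$. Since the $e_i$ are orthonormal in the ambient inner product, $\|\sum_i \alpha_i e_i\|_2^2 = \sum_i \alpha_i^2$, so $\varphi$ preserves distances between points of $H$, establishing that $H$, viewed with the restricted Euclidean metric, is isometric to $\ell_2^k$.

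The argument is essentially routine linear algebra; the only place that requires care is keeping the affine and linear pictures separate, since the hypothesis and conclusion are stated in affine terms (points and hyperplanes) while the tools used (spans, bases, dimension) are linear. Anchoring everything at $x_0$ via the translation $x \mapsto x - x_0$ makes this bookkeeping transparent, and both existence and uniqueness then reduce to counting dimensions of linear subspaces.
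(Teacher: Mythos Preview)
Your argument is correct: translating by $-x_0$ to convert the affine question into a linear one, then using independence of the displacement vectors for existence and a dimension count for uniqueness, is the standard route, and your Gram--Schmidt construction of an isometry to $\ell_2^k$ cleanly handles the ``moreover'' clause.

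The paper, however, does not supply a proof of this lemma at all: it states the result and cites Chapter~7 of Kolmogorov's textbook, treating it as a known fact from linear algebra and affine geometry. So there is no paper proof to compare against; you have simply written out in full the routine argument that the authors chose to delegate to a reference. What you have is exactly the kind of proof one would expect to find behind that citation.
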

\begin{proof}
From Lemma  \ref{lemma_hyperplane}, any $\ell_2^n$ space is $(k+1)$-embeddable in $\ell_2^{k}$.
Therefore any $\ell_2^n$ space is $4$-embeddable in \threed.
\end{proof}

\begin{corollary}
The Hilbert Exclusion Condition is valid over Euclidean spaces of any dimension.
\end{corollary}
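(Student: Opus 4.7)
The plan is essentially to chain the two results just proved: the preceding Theorem establishes that every $\ell_2^n$ is isometrically $4$-embeddable in \threed, while Theorem 2 of Section \ref{section_better_exclusion} states that any metric space with that embedding property satisfies the Hilbert Exclusion condition. So I would simply observe that the hypotheses of Theorem 2 are fulfilled by each $\ell_2^n$, and the conclusion follows.

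More concretely, I would first fix an arbitrary dimension $n$ and an arbitrary quadruple of points $p_1, p_2, q, s \in \ell_2^n$ with $d(q,p_2) < d(q,p_1)$ and $d(q,s) \le t$, satisfying
\[
\frac{d(q,p_1)^2 - d(q,p_2)^2}{2\,d(p_1,p_2)} > t.
\]
By the preceding Theorem, there exists an isometry $f: \{p_1,p_2,q,s\} \to \threed$, and then Theorem 2 (applied via Theorem \ref{theorem_2d}) forces $d(s,p_2) < d(s,p_1)$. Since $n$ and the four points were arbitrary, the Hilbert Exclusion condition holds in every Euclidean space.

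There is really no technical obstacle here, since the corollary is a direct composition of two previously proved statements. The only thing worth double-checking is that the exclusion condition is a statement purely about the six pairwise distances among the four points $p_1,p_2,q,s$, so that preserving those distances under $f$ is enough to transport both the hypothesis and the conclusion between $\ell_2^n$ and \threed. This is manifestly the case from the form of the inequality, so no additional work is required.
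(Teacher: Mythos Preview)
Your proposal is correct and matches the paper's approach: the Corollary is stated immediately after the theorem establishing that every $\ell_2^n$ is isometrically $4$-embeddable in \threed, and its validity follows by combining that with Theorem~2 exactly as you describe. The paper does not write out a separate proof, treating it as an immediate consequence, so your explicit chaining of the two results is just a spelled-out version of what the paper leaves implicit.
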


However, we have a  more general result: any metric space which has an isometric embedding in a Hilbert space is also $4$-embeddable in \threed. This  includes Euclidean space of any dimension, but also  includes other important spaces,  notably any governed by the Jensen-Shannon distance.

\subsection{Inner Product Spaces and Hilbert Spaces}


The importance of Hilbert spaces is the generalisation of the notion of Euclidean space by extending the methods of vector algebra and calculus to spaces with any finite or infinite number of dimensions. A Hilbert space is an abstract vector space possessing the structure of an inner product that allows length  and  angle to be measured which gives certain geometric properties. These properties extend to  abstract, non-geometric spaces which can be isometrically embedded in a Hilbert space.  The key property of interest here is in 4-point isometric embedding in \threed.

\begin{lemma}[Shoenberg's Theorem \cite{Schoenberg,topsoe2003jenson}]
\label{lemma_schoenberg}
	Let $X$ be a nonempty set and $K: X\times X \to \mathbb{R}$ a mapping that satisfies the positivity and symmetric proprieties and such that, for all finite sets $(c_i)_{i\leq n}$ of real numbers and all finite sets $(x_i)_{i\leq n}$ of points in $X$, the implication
	\begin{equation}
	\label{eqn_neg_semi_def}
	\sum_{i=1}^{n} c_i=0\Rightarrow \sum_{i,j=1}^{n} c_i c_j K(x_i, x_j)\leq 0
	\end{equation}
	holds (i.e., K is conditionally negative semidefinite function). Then $(X, \sqrt{K}) $ is a metric space which can be embedded isometrically as a subspace of a real Hilbert space. 
%
%
\end{lemma}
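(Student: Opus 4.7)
The plan is to reduce the claim to the standard fact that every positive semidefinite kernel arises as an inner product in some real Hilbert space. The bridge is an auxiliary kernel obtained by ``centring'' $K$ at an arbitrary reference point $x_0 \in X$ (here using that $X$ is nonempty), namely
\[
L(x,y) := \tfrac{1}{2}\bigl(K(x,x_0) + K(y,x_0) - K(x,y)\bigr).
\]
If I can show that $L$ is positive semidefinite, then the Moore--Aronszajn construction (equivalently, a GNS-style quotient and completion of the free real vector space on $X$) produces a real Hilbert space $H$ together with a map $\phi: X \to H$ such that $\langle \phi(x), \phi(y)\rangle_H = L(x,y)$. A direct computation using $K(x,x)=0$ (forced by the positivity assumption, since $\sqrt{K}$ must vanish only on the diagonal) then gives $\|\phi(x)-\phi(y)\|_H^2 = L(x,x)+L(y,y)-2L(x,y) = K(x,y)$, so $\phi$ is an isometric embedding of $(X,\sqrt{K})$ into $H$.

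The technical heart of the argument is the verification that $L$ is positive semidefinite. Given $x_1,\dots,x_n \in X$ and reals $a_1,\dots,a_n$, let $c := \sum_i a_i$ and apply the conditional negative semidefinite hypothesis to the augmented tuple $(x_0, x_1, \dots, x_n)$ with coefficients $(-c, a_1, \dots, a_n)$, which sum to zero by construction. Expanding the resulting inequality and using $K(x_0,x_0)=0$ together with the symmetry of $K$ produces
\[
\sum_{i,j=1}^n a_i a_j K(x_i,x_j) \;\le\; 2c\sum_{i=1}^n a_i K(x_i, x_0),
\]
which rearranges directly to $\sum_{i,j} a_i a_j L(x_i,x_j) \ge 0$.

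With the isometric embedding in hand, the metric-space claim is essentially automatic: positivity and symmetry of $\sqrt{K}$ are inherited from those of $K$, and the triangle inequality is inherited from the Euclidean metric on $H$ (this is where that axiom actually gets established, since $\sqrt{K}$ is not \emph{a priori} assumed to be a metric). I expect the main obstacle to be conceptual rather than computational: the hypothesis only supplies inequalities for coefficient sequences that sum to zero, so the trick of augmenting the tuple with the extra point $x_0$ carrying the compensating weight $-c$ is precisely what converts the constrained negative semidefinite statement about $K$ into the unconstrained positive semidefinite statement about $L$. Once $L$ is known to be positive semidefinite, the remaining construction of $H$ and $\phi$ is standard and may be cited rather than reproduced.
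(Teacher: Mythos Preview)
Your argument is correct and is in fact the standard proof of Schoenberg's theorem: centre $K$ at a base point to obtain a positive semidefinite kernel $L$, realise $L$ as an inner product via the Moore--Aronszajn (GNS) construction, and then read off $\|\phi(x)-\phi(y)\|^2 = K(x,y)$. The augmentation trick with coefficients $(-c,a_1,\dots,a_n)$ is exactly the right device for converting the conditional negative semidefinite hypothesis into unconditional positive semidefiniteness of $L$, and your computation checks out.

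The paper, however, does not prove this lemma at all: it is stated as a named result (Schoenberg's Theorem) with citations to \cite{Schoenberg} and \cite{topsoe2003jenson}, and is used as a black box throughout Section~\ref{section_four_embeddable_spaces}. So there is no ``paper's own proof'' to compare against; you have supplied what the paper deliberately omits. Your write-up would be appropriate as a self-contained appendix proof, but for the purposes of matching the paper a bare citation suffices.
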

%
%
%

The main importance from our perspective is that, given a metric space $(X,\sqrt{K} )$, it is sufficient for $K$ to be a conditionally negative semidefinite function in order to have isometric embeddability into a Hilbert Space.

\begin{lemma}[Blumenthal Lemma 53.1 \cite{blumenthal1953}]  
A numerable semimetric space is isometrically embeddable in a Hilbert space if and only if it is isometrically $n$-embeddable in $\ell^{n-1}_2$ for every positive integer $n$.   
\end{lemma}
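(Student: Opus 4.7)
The plan is to prove both directions separately, using the standard correspondence between pairwise distances and Gram matrices so that the question reduces to a positive semidefiniteness condition which lifts cleanly from the finite to the countable case.

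For the forward direction ($\Rightarrow$), suppose $(X,d)$ admits an isometric embedding $\Phi: X \to H$ into some Hilbert space $H$. Given any $n$ points $x_1,\ldots,x_n \in X$, the images $\Phi(x_1),\ldots,\Phi(x_n)$ lie in the affine hull $A = \Phi(x_1) + \mathrm{span}\{\Phi(x_i)-\Phi(x_1) : 2 \le i \le n\}$, which has affine dimension $k \le n-1$. An affine subspace of a Hilbert space of dimension $k$ is isometric to $\ell_2^k$, which embeds isometrically into $\ell_2^{n-1}$ by padding coordinates with zeros; composing these maps gives the required $n$-embedding.

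For the backward direction ($\Leftarrow$), I would enumerate $X = \{x_1, x_2, \ldots\}$ and define the symmetric array
\[
G_{ij} = \tfrac{1}{2}\bigl(d(x_1,x_i)^2 + d(x_1,x_j)^2 - d(x_i,x_j)^2\bigr), \qquad i,j \ge 2.
\]
The first step is to show that the hypothesis forces every finite principal submatrix $(G_{ij})_{2 \le i,j \le n}$ to be positive semidefinite: fixing an isometric embedding $f_n$ of $\{x_1,\ldots,x_n\}$ into $\ell_2^{n-1}$ and setting $v_i = f_n(x_i) - f_n(x_1)$, the polarization identity gives $\langle v_i, v_j \rangle = G_{ij}$, so the submatrix is a genuine Gram matrix of vectors in $\ell_2^{n-1}$ and therefore PSD. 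The second step is to produce a Hilbert embedding from this data: form the free real vector space $V$ on symbols $\{e_i : i \ge 2\}$, extend $G$ to a symmetric bilinear form $B$ on $V$ by $B(e_i,e_j) = G_{ij}$, quotient by the radical $\{v : B(v,v) = 0\}$, and complete to obtain a Hilbert space $H$. The assignment $x_1 \mapsto 0$ and $x_i \mapsto [e_i]$ is then an isometry since $\|[e_i] - [e_j]\|^2 = G_{ii} + G_{jj} - 2G_{ij} = d(x_i,x_j)^2$.

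The main obstacle I anticipate is verifying that the bilinear form $B$ is well-defined and PSD on the full free vector space $V$. For this one needs the consistency observation that the values $G_{ij}$ depend only on the original distances in $(X,d)$, not on the choice of the $f_n$; hence the finite PSD witnesses assemble coherently into a single global form, whose evaluation on any finite linear combination $v = \sum c_i e_i$ is nonnegative because it coincides with $c^\top G^{(n)} c$ for the relevant finite PSD submatrix $G^{(n)}$. As a consistency check, this approach aligns with Schoenberg's theorem (Lemma~\ref{lemma_schoenberg}): positive semidefiniteness of every finite principal submatrix of $G$ is equivalent to conditional negative semidefiniteness of $d^2$, which by Lemma~\ref{lemma_schoenberg} is exactly the criterion for Hilbert embeddability, so the two formulations give the same result by different routes.
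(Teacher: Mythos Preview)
The paper does not supply its own proof of this lemma: it is stated with a citation to Blumenthal's book and then used as a black box. There is therefore nothing to compare against at the level of argument.

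On its own merits your proposal is correct and is in fact the standard route to this result. The forward direction is immediate from the affine-hull observation. For the backward direction, the Gram-matrix construction is the right idea, and you have handled the two places where one might worry: the values $G_{ij}$ are defined purely in terms of the given distances, so the finite PSD witnesses are automatically consistent; and positivity of $B$ on the whole of $V$ follows because any finite linear combination is supported on some $\{e_2,\ldots,e_n\}$, where $B$ agrees with the genuine Gram matrix coming from an $n$-embedding. The verification $G_{ii}+G_{jj}-2G_{ij}=d(x_i,x_j)^2$ is a direct calculation, and the quotient by the radical is legitimate since Cauchy--Schwarz for a PSD form shows $\{v:B(v,v)=0\}$ is a linear subspace. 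Your closing remark linking this to Schoenberg's criterion is also accurate: PSD of every finite principal submatrix of $G$ is exactly conditional negative semidefiniteness of $d^2$, so Lemma~\ref{lemma_schoenberg} gives an alternative packaging of the same implication.
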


\begin{lemma}[Scholtes Proposition 1.3 \cite{scholtes2013characterisation}]
		Let $(X,\|\cdot\|)$ be a normed vector space. Then the following statements are equivalent:
		\begin{itemize}
			\item $(X,\|\cdot\|)$ is an inner product space, i.e., there exists an inner product $<\cdot,\cdot>$ on $X$ which induces the norm: $\forall\, x\in X, \, \|x\|=\sqrt{<x,x>}$ 
			\item  all subsets $\{u,v,w,x\}\subset X$ are isometrically embeddable in \threed.
		\end{itemize}
\label{lemma_scholtes}
\end{lemma}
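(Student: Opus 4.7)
The plan is to prove the two directions separately: the forward implication (inner product space implies 4-embeddability) is routine, while the reverse (4-embeddability implies inner product) is the substantive part. For the hard direction my strategy will be to verify the parallelogram identity $\|x+y\|^2 + \|x-y\|^2 = 2\|x\|^2 + 2\|y\|^2$ for every $x, y \in X$, at which point the classical Jordan--von Neumann theorem supplies a compatible inner product.

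For the easy direction, given an inner product space $(X,\|\cdot\|)$ and any four points $u,v,w,z \in X$, they lie in the affine flat $u + \text{span}(v-u, w-u, z-u)$, a real inner product space of dimension at most 3. Any such space is isometrically isomorphic to $\ell_2^k$ for some $k \leq 3$ (pick an orthonormal basis via Gram--Schmidt), and hence embeds isometrically into \threed.

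For the hard direction, fix $x,y \in X$ and apply 4-embeddability to the subset $\{0, x, y, x+y\}$, obtaining points $A, B, C, D \in\,$\threed that realise all six pairwise distances: $\|B-A\|_2 = \|x\|$, $\|C-A\|_2 = \|y\|$, $\|D-A\|_2 = \|x+y\|$, $\|C-B\|_2 = \|x-y\|$, $\|D-B\|_2 = \|y\|$, $\|D-C\|_2 = \|x\|$ (the last two because $\|x-(x+y)\| = \|y\|$ and $\|y-(x+y)\| = \|x\|$). Now expand $\|A + D - B - C\|_2^2$ in \threed using the polarization identity $2\langle U, V\rangle = \|U\|_2^2 + \|V\|_2^2 - \|U-V\|_2^2$; the pure-norm terms $\|A\|_2^2, \|B\|_2^2, \|C\|_2^2, \|D\|_2^2$ cancel by symmetry, leaving
$$\|A + D - B - C\|_2^2 \;=\; 2\|x\|^2 + 2\|y\|^2 - \|x+y\|^2 - \|x-y\|^2.$$
Since the left-hand side is a squared Euclidean norm it is non-negative, yielding $\|x+y\|^2 + \|x-y\|^2 \leq 2\|x\|^2 + 2\|y\|^2$.

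The reverse inequality follows by applying the one just proved to the pair $(x+y)/2$, $(x-y)/2$: this substitution swaps the roles of $\|x\|^2 + \|y\|^2$ and $(\|x+y\|^2 + \|x-y\|^2)/2$, producing the opposite inequality. Equality in the parallelogram identity follows, and Jordan--von Neumann finishes the argument. The main obstacle I anticipate is spotting the correct linear combination $A + D - B - C$ whose squared norm encodes the parallelogram defect, i.e., the vector that vanishes precisely when $A, B, D, C$ form a parallelogram in the Euclidean realisation; once this combination is identified, the polarization bookkeeping and the symmetrising substitution are routine.
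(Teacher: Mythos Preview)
Your proof is correct. The paper does not actually prove this lemma at all: it is quoted verbatim as Proposition~1.3 from Scholtes and used as a black box, so there is no ``paper's own proof'' to compare against. What you have supplied is a clean self-contained argument.

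Both directions are sound. The forward direction is the standard observation that four points in an inner product space span an affine flat of dimension at most three. For the reverse direction, your choice of the quadruple $\{0,x,y,x+y\}$ and the vector $A+D-B-C$ is exactly right: a direct expansion (as you indicate) gives
\[
\|A+D-B-C\|_2^2 \;=\; 2\|x\|^2 + 2\|y\|^2 - \|x+y\|^2 - \|x-y\|^2 \;\ge\; 0,
\]
and the substitution $x\mapsto (x+y)/2$, $y\mapsto (x-y)/2$ reverses the inequality, yielding the parallelogram law and hence the inner product via Jordan--von~Neumann. This is essentially the classical route to the result (and is presumably close to what Scholtes does), so there is nothing to flag.
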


By definition, any Hilbert space is a normed vector space which is also an inner product space. From the above lemmata, we can observe that for any semimetric, negative semidefinite kernel function $K$ over $\mathbb{R}^n$, then $(\mathbb{R}^n,\sqrt{K})$ is a proper metric space which can be searched using  our new exclusion rule. The fact that the resulting metric space is a subspace of  Hilbert space is not strictly necessary for this purpose, although it gives other potentially valuable geometric properties as well. 
In fact, the Hilbert embeddability guarantees the $n$-point property for all  $n$, while just the $4$-point property is required for our new exclusion rule.
It is worth noting that in \cite{blumenthal1953} a weaker version of the Schoenberg's theorem is used to characterise any  metric space which has the 4-point property: 

\begin{lemma}[\cite{blumenthal1953}]
A metric space $(X,d)$ is isometrically 4-embeddable in $\ell^3_2$ if and only if  for all set  $\{c_1,c_2,c_3,c_4\}$ of real numbers and all finite sets $\{x_1,x_2,x_3,x_4\}$ of points in $X$, the implication
	\begin{equation}
	\label{eqn_neg_semi_def}
	\sum_{i=1}^{4} c_i=0\Rightarrow \sum_{i,j=1}^{4} c_i c_j d(x_i, x_j)^2\leq 0
	\end{equation}
	holds.
\end{lemma}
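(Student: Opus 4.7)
The plan is to prove both directions of the biconditional, with the forward direction being a direct computation and the reverse direction proceeding via a Gram matrix construction.

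For the forward direction, I assume that $(X,d)$ is isometrically $4$-embeddable in $\ell_2^3$ and fix any four points $x_1,\dots,x_4 \in X$. By hypothesis there is a map $f\colon\{x_1,\dots,x_4\}\to\ell_2^3$ with $d(x_i,x_j)=\|f(x_i)-f(x_j)\|_2$. For any $c_1,\dots,c_4\in\mathbb{R}$ with $\sum_i c_i=0$, I expand
\[
\sum_{i,j=1}^{4} c_i c_j \|f(x_i)-f(x_j)\|_2^2 = \sum_{i,j} c_i c_j\bigl(\|f(x_i)\|^2+\|f(x_j)\|^2-2\langle f(x_i),f(x_j)\rangle\bigr).
\]
The first two terms vanish because $\sum_i c_i=0$, and the remaining term collapses to $-2\bigl\|\sum_i c_i f(x_i)\bigr\|_2^2 \le 0$, which yields the required inequality.

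For the reverse direction, I fix the four points $x_1,\dots,x_4\in X$ and choose $x_1$ as the designated ``origin.'' I then define the candidate Gram matrix $G\in\mathbb{R}^{3\times 3}$ for the remaining three points by
\[
G_{ij} \;=\; \tfrac{1}{2}\bigl(d(x_1,x_i)^2 + d(x_1,x_j)^2 - d(x_i,x_j)^2\bigr), \qquad i,j\in\{2,3,4\}.
\]
The key step is to show $G$ is positive semidefinite. Given any $a_2,a_3,a_4\in\mathbb{R}$, I set $c_1 = -(a_2+a_3+a_4)$ and $c_i=a_i$ for $i\ge 2$, so that $\sum_{i=1}^4 c_i=0$. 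A short expansion, separating the $i=1$ and $j=1$ contributions from those with $i,j\ge 2$ and using $\sum_{i\ge 2} a_i = -c_1$, rearranges to
\[
\sum_{i,j=1}^{4} c_i c_j\, d(x_i,x_j)^2 \;=\; -2\sum_{i,j=2}^{4} a_i a_j\, G_{ij}.
\]
The hypothesis thus gives $\sum_{i,j\ge 2} a_i a_j G_{ij}\ge 0$ for all choices of $a_i$, i.e.\ $G\succeq 0$.

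Because $G$ is a $3\times 3$ positive semidefinite matrix, it admits a Cholesky-type factorisation: there exist vectors $v_2,v_3,v_4\in\mathbb{R}^3$ with $\langle v_i,v_j\rangle = G_{ij}$. Defining $f(x_1)=0$ and $f(x_i)=v_i$ for $i=2,3,4$, I verify the six distances: $\|v_i\|^2 = G_{ii} = d(x_1,x_i)^2$, and $\|v_i-v_j\|^2 = G_{ii}+G_{jj}-2G_{ij} = d(x_i,x_j)^2$ by construction. This exhibits the isometric embedding into $\ell_2^3$. The main subtlety to check carefully is the algebraic identity relating the hypothesis to positive semidefiniteness of $G$; everything else is automatic since the Gram matrix of three vectors in any Euclidean space of dimension at least three can always be realised in $\ell_2^3$.
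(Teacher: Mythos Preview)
Your proof is correct. The forward direction is the standard polarisation computation, and your reverse direction is the classical Schoenberg--Menger argument: anchor one point at the origin, form the $3\times 3$ matrix $G_{ij}=\tfrac12(d(x_1,x_i)^2+d(x_1,x_j)^2-d(x_i,x_j)^2)$, use the conditional negativity hypothesis (with $c_1=-\sum_{i\ge2}a_i$) to show $G\succeq 0$, and then factor $G$ as a Gram matrix of three vectors in $\ell_2^3$. The algebraic identity $\sum_{i,j=1}^4 c_ic_j\,d(x_i,x_j)^2=-2\sum_{i,j\ge2}a_ia_j G_{ij}$ is exactly right, and the distance verification at the end is immediate.

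There is nothing to compare against in the paper itself: this lemma is simply quoted from Blumenthal~\cite{blumenthal1953} without proof, as background for the discussion of which metrics enjoy the four-point property. Your argument is precisely the one Blumenthal (and Schoenberg before him) gives, so in that sense you have reconstructed the intended proof.
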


\subsection{Jensen-Shannon Distance}

\begin{lemma}[Tops{\o}e \cite{fuglede2004jensen}]
For an appropriate definition of Jensen-Shannon divergence (JSD), the space $(M_+^1(A),\sqrt{\textit{JSD}})$ is isometrically isomorphic to a subset in Hilbert Space.
\label{lemma_topsoe}
\end{lemma}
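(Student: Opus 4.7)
The strategy is to invoke Schoenberg's theorem (Lemma~\ref{lemma_schoenberg}) with the kernel $K=\textit{JSD}$ on $X=M_+^1(A)$. Once $\textit{JSD}$ is shown to be a nonnegative, symmetric function, vanishing on the diagonal, and conditionally negative semidefinite, Schoenberg's theorem immediately yields that $\bigl(M_+^1(A),\sqrt{\textit{JSD}}\bigr)$ is a metric space isometrically embeddable in a real Hilbert space, which is the claim.

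First I would fix the ``appropriate'' definition. For $P,Q\in M_+^1(A)$, set $M=\tfrac12(P+Q)$ and take
\[
\textit{JSD}(P,Q) \;=\; \tfrac12 D_{KL}(P\|M) + \tfrac12 D_{KL}(Q\|M) \;=\; H(M) - \tfrac12\bigl(H(P)+H(Q)\bigr),
\]
where $H$ denotes Shannon entropy. Symmetry is evident, and nonnegativity together with vanishing on the diagonal follow from the strict concavity of $H$. Thus the whole substance of the lemma reduces to verifying the implication~(\ref{eqn_neg_semi_def}) from Schoenberg's theorem, i.e.\ conditional negative semidefiniteness of $\textit{JSD}$.

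To establish this I would write $\textit{JSD}$ as a nonnegative mixture of elementary conditionally negative semidefinite kernels. The starting point is the Frullani-type identity $\log x=\int_0^\infty \frac{e^{-t}-e^{-xt}}{t}\,dt$ for $x>0$, which, after substitution into $H(M)-\tfrac12(H(P)+H(Q))$ and a rearrangement that exploits $\sum_a P(a)=\sum_a Q(a)=1$, produces a representation of the form
\[
\textit{JSD}(P,Q) \;=\; \int_0^\infty \sum_{a\in A} k_t\bigl(P(a),Q(a)\bigr)\, d\mu(t),
\]
for a suitable nonnegative measure $\mu$ on $(0,\infty)$ and a parametric family of scalar kernels $k_t$. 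The key property that must then be verified is that, for every fixed $t>0$, the coordinate-wise sum $(P,Q)\mapsto \sum_a k_t(P(a),Q(a))$ is conditionally negative semidefinite on $M_+^1(A)$. Granting this, the class of conditionally negative semidefinite kernels is closed under nonnegative linear combinations and pointwise limits, so integration against $\mu$ preserves the property and $\textit{JSD}$ inherits it.

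The main obstacle is the middle step: producing the explicit integrand $k_t$ and verifying that for every $t>0$, every finite sample $P_1,\dots,P_n\in M_+^1(A)$ and every $(c_i)$ with $\sum_i c_i=0$, the quadratic form $\sum_{i,j} c_i c_j\sum_a k_t(P_i(a),P_j(a))$ is nonpositive. This is the delicate analytic computation underlying Topsøe's original argument; once it is in hand, the remainder, namely assembly via the integral representation and invocation of Lemma~\ref{lemma_schoenberg}, is purely formal.
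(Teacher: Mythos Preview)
Your proposal is correct and follows essentially the same route the paper attributes to Tops{\o}e: the paper does not give its own detailed proof of this lemma but simply records that ``Tops{\o}e uses Schoenberg's conjecture to prove this property by showing that JSD is itself a negative semidefinite mapping with the semi-metric properties,'' which is exactly your plan of verifying the hypotheses of Lemma~\ref{lemma_schoenberg} for $K=\textit{JSD}$. Your sketch of the integral representation is one standard way to carry out the conditional negative semidefiniteness check that the paper leaves to the cited reference.
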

The term Jensen-Shannon divergence is used variously with slightly different meanings; to avoid ambiguity, we define it here as
\begin{equation*}
\mathit{JSD}(v,w) = 1 - \tfrac{1}{2}\sum_i (h(v_i) + h(w_i) - h(v_i + w_i) )
\end{equation*}
where
\begin{equation*}
h(x) = -x \log_2 x
\end{equation*}
which formulation,  explained in \cite{connor:jsd}, is consistent with other authors and neatly bounds the range into [0,1].

Here, the set $M_+^1(A)$ is the set of probability distributions, which we can safely interpret as a set of positive numeric vectors $\{v\} \in \mathbb{R}^n$ for some $n$ where $\sum_{i}^n v_i = 1$  (although the original definition extends to continuous spaces as well.)  Tops{\o}e uses Schoenberg's conjecture to prove this property by showing that JSD is itself a negative semidefinite mapping with the semi-metric properties. Although it has already been proved by more than one author that Jensen-Shannon distance (with the meaning of $\sqrt{\textit{JSD}}$ in Tops{\o}e's notation) is a proper metric (\cite{endres:2003},\cite{OstVaj03}) this proof of Hilbert space embedding gives that as a rather more elegant side-effect.

\begin{theorem}
The space $(M_+^1(A),\sqrt{\textit{JSD}})$  is isometrically $4$-embeddable in \threed, and can therefore use Hilbert Exclusion with hyperplane partitioning.
\end{theorem}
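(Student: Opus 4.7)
The plan is to chain together the preceding machinery: Tops{\o}e's lemma already gives an isometric embedding of $(M_+^1(A),\sqrt{\textit{JSD}})$ into a Hilbert space, and a Hilbert space in turn is 4-embeddable in \threed (by Blumenthal's Lemma 53.1, or equivalently by Scholtes' Proposition in Lemma~\ref{lemma_scholtes}, since any Hilbert space is by definition an inner product space). Composing these two isometries produces the desired 4-embedding into \threed, and the conclusion about Hilbert Exclusion then follows directly from the earlier theorem that any space 4-embeddable in \threed satisfies the Hilbert Exclusion condition.

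Concretely, I would proceed in three short steps. First, apply Lemma~\ref{lemma_topsoe} to obtain an isometry $\phi : (M_+^1(A),\sqrt{\textit{JSD}}) \to H$ for some real Hilbert space $H$. For any chosen $p_1,p_2,q,s \in M_+^1(A)$, the quadruple $\phi(p_1),\phi(p_2),\phi(q),\phi(s) \in H$ preserves all six pairwise distances. Second, apply Blumenthal's Lemma 53.1 to $H$: since $H$ is trivially embeddable in itself, it is $n$-embeddable in $\ell_2^{n-1}$ for every $n$, and in particular 4-embeddable in \threed, yielding an isometry $\psi$ of the quadruple $\{\phi(p_i),\phi(q),\phi(s)\}$ into \threed. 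Third, observe that isometric embeddability composes: the restriction of $\psi \circ \phi$ to $\{p_1,p_2,q,s\}$ is an isometric embedding of these four points into \threed. Since the four points were arbitrary, $(M_+^1(A),\sqrt{\textit{JSD}})$ is isometrically 4-embeddable in \threed.

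Finally, invoking the earlier theorem asserting the validity of the Hilbert Exclusion condition on any space 4-embeddable in \threed completes the statement. There is no real obstacle here: all of the substantive content has already been done in Tops{\o}e's proof that $\textit{JSD}$ is conditionally negative semidefinite and in the classical Schoenberg/Blumenthal structure theorems; this theorem is essentially a corollary that packages them together. The only point demanding a moment's care is the elementary but unstated fact that isometric embeddings compose, so that a 4-embedding of $H$ into \threed pulls back through $\phi$ to give one of the source space.
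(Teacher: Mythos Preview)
Your proposal is correct and follows essentially the same route as the paper: the paper's proof is simply the one-line remark that the theorem is ``a direct consequence of Lemmata~\ref{lemma_scholtes} and~\ref{lemma_topsoe},'' and you have unpacked precisely that chain (Tops{\o}e's embedding into Hilbert space, then the four-point property of Hilbert/inner-product spaces via Scholtes or Blumenthal, then composition). Your version is more explicit about the composition step and about invoking the earlier theorem for the Hilbert Exclusion conclusion, but the underlying argument is identical.
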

This is now  a direct consequence of Lemmata \ref{lemma_scholtes} and  \ref{lemma_topsoe}.

\subsection{Triangular Distance}
\label{subsec_triangle}
To establish the generality of our results, we give one more example of a proper metric which is also Hilbert space embeddable and can therefore  be indexed using Hilbert Exclusion.

The function 
\[
k(v,w) = \sum_ i \frac{\, (v_i - w_i)^2}{v_i + w_i} 
\]
(where $v,w \in \mathbb{R}^n, \sum_i v_i = \sum_i w_i = 1$) 
has been identified and named in \cite{Topsoe2000} as \emph{Triangular Discrimination}. Although rarely used in pratice, it is of significant interest as it has relatively tight upper and lower bounds over  the much more expensive Jensen-Shannon distance \cite{Topsoe2000}. $k$ is  a semi-metric, so if it is negative semidefinite then $\sqrt k$ is a Hilbert-embeddable proper metric.

As $k$ is a summation it is sufficient to prove that
\[f(x,y)=\dfrac{(x-y)^2}{x+y}\]
 is conditionally negative semidefinite. 
Recalling the definition of negative semidefinite (Equation \ref{eqn_neg_semi_def}) we require
\[\sum_{i,j}  \frac{(x_i - x_j )^2}{x_i + x_j}c_ic_j \le 0\]
for any finite set of real numbers $(c_i)_{i\le m}$ such that $\sum_i c_i = 0$
and for any finite set $(x_i)_{i\leq n}$ of points in $X$.
 
Observing that $(x_i-x_j)^2=(x_i+x_j)^2-4x_ix_j$  we obtain

\begin{align*}
\sum_{i,j}^{m} c_i c_j \dfrac{(x_i-x_j)^2}{x_i+x_j}
 &=\sum_{i,j}^{m} c_i c_j x_i + \sum_{i,j}^{m} c_i c_jx_j - 4\sum_{i,j}^{m} c_i c_j\dfrac{x_ix_j}{x_i+x_j}\\
 %
 %
 & =-4 \sum_{i,j}^{m} c_i c_j\dfrac{x_ix_j}{x_i+x_j}
 \end{align*}
as the first two terms sum to zero.
 Thus it is sufficient to prove that \[\sum_{i,j}^{m} c_i c_j\dfrac{x_ix_j}{x_i+x_j}\geq 0\]

As the index $i,j$ such  $x_i=0$ or $x_j=0$ do not  contribute to the summation, we can assume that all the $x_i, x_j$ are positive.
  \begin{align*}
     \sum_{i,j}^{m} c_i c_j\dfrac{x_ix_j}{x_i+x_j}&=\sum_{i,j}^{m} c_i c_j{x_ix_j}\int_{0}^{\infty} e^{-t(x_i+x_j)} dt\\
     &=\int_{0}^{\infty} \sum_{i,j}^{m} c_i c_j{x_ix_j}e^{-t(x_i+x_j)} dt\\
      &=\int_{0}^{\infty} \left(\sum_{i}^{m} c_i {x_i}e^{-tx_i}\right)\left(\sum_{j}^{m} c_j {x_j}e^{-t x_j}\right) dt\\
      &=\int_{0}^{\infty} \left(\sum_{i}^{m} c_i {x_i}e^{-tx_i}\right)^2 dt\geq 0
  \end{align*}

%
%
%
%
%
%
 This therefore gives us that
\[D_{\text{tri}}(v,w) = \sqrt{\sum_ i \frac{\, (v_i - w_i)^2}{v_i + w_i}}\]
which we name as Triangular Distance, is a proper metric such that $(M_+^1(A),D_{\text{tri}})$  is a metric space which is isometrically embeddable in Hilbert space.
 
\subsection{Spaces with Cosine Distance}
The term ``Cosine" distance does not have a unique meaning in the metric space literature and so requires an explanation.

It has long been known that, for two values $v,w$ in $\mathbb{R}^n$, then the function
\[
S_{\text{Cos}}(v,w) = \frac{v \cdot w}{\|v\|\|w\|}
\]
gives a convenient estimate of their dimensional correlation. One  advantage of this is that it is cheap to calculate, especially when the space is sparse such as applications in information retrieval. This function calculates the cosine of the angle between the vectors, and is best referred to as the Cosine Similarity Coefficient.

As it is bounded in $[0,1]$, the function $f(v,w) = 1 - S_\text{Cos}(v,w)$ gives a bounded divergence coefficient;  however this function is  not a proper metric, as it lacks triangle inequality. A function which gives the same rank order and is also a proper metric can be simply achieved by converting this value into the angle between two vectors, which can be caused to range within $[0,1]$ by $d_\text{Cos}(v,w) = 1 - cos^{-1}(S_\text{Cos}(v,w)) / 2\pi$. In the metric space literature, this function is sometimes referred to as Cosine Distance \cite{SISAP_man,connor:multivariate}.

This function is a proper metric, but is not isometrically embeddable in Hilbert space. However, there exists another rank-equivalent function based on the Cosine similarity:
\[
d_\text{Cos}(v,w)  = \sqrt{1-S_\text{Cos}(v,w)}
\]
In fact, since $\|v-w\|^2= \|v\|^2+\|w\|^2-2v\cdot w$, the distance $d_\text{Cos}(v,w)$ is equivalent to the Euclidean distance computed on the normalized vectors ${v}/{{\|v\|}}$ and ${w}/{{\|w\|}}$:
\[
d_\text{Cos}(v,w)  = d_\text{Cos}\left(\frac{v}{ {\|v\|}},\frac{w}{{\|w\|}}\right)=
\frac{1}{\sqrt{2}}\,{\left\|\frac{v}{ {\|v\|}}-\frac{w}{ {\|w\|}}\right\| }
\]
and is 
therefore isometrically 4-embeddable in three dimensional Euclidean space, and hence in a Hilbert space.
\subsection{High-Dimensional Euclidean Space}

For completeness we reconsider $n$-dimensional Euclidean space for any $n$ in the context of Hilbert embedding. From  Lemmata \ref{lemma_schoenberg}  and \ref{lemma_scholtes}  it is sufficient to show that the function $K(v,w) = \sum_i (v_i - w_i)^2$ is a conditionally negative semi-definite semi-metric, which is straightforward to demonstrate using a similar proof to that used in Section \ref{subsec_triangle}.

\subsection{Non-Embeddable Spaces}
\begin{figure}[t]
\begin{center}
\begin{tikzpicture}[scale=1]

\filldraw (0.5,0.6) circle (1pt)  node[anchor=east]{p};
\filldraw (0,1) circle (1pt)  node[anchor=east]{q};
\filldraw (0.5,0) circle (1pt)  node[anchor=east]{r};
\filldraw (1,1) circle (1pt) node[anchor=east]{s};

\draw[] (0.5,0.6)-- (0,1);
\draw[] (0.5,0.6) -- (0.5,0);
\draw[] (0.5,0.6)-- (1,1);

\end{tikzpicture}
\hspace{1cm}
\begin{tikzpicture}[scale=1]

\filldraw (0,0) circle (1pt) node[anchor=east]{a};
\filldraw (0,1) circle (1pt) node[anchor=east]{b};
\filldraw (1,0) circle (1pt) node[anchor=west]{d};
\filldraw (1,1) circle (1pt)node[anchor=west]{c};

\draw[] (0,0) -- (0,1);
\draw[] (0,0) -- (1,0);
\draw[] (1,0) -- (1,1);
\draw[] (0,1) -- (1,1);
\end{tikzpicture}
\caption{Non-Embeddable Metric Spaces}
\end{center}
\label{fig_non_embed_graphs}
\end{figure}
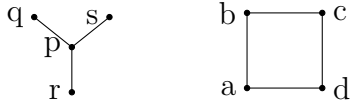
\begin{table}[t]
\label{table_non_embeddable}
\begin{center}
\caption{Classic non-embeddable examples: distances between nodes for the star graph and the Hamming Cube}{
\begin{tabular}{|c|c|c|c|c|}
\hline
&q&r&s&p						\\
\hline
q&	&	2&	2& 1					\\
\hline
r&	&	&	2&1					\\
\hline
s&	&&&1					\\
\hline
p&&&&						\\
\hline
\end{tabular}
\qquad
\begin{tabular}{|c|c|c|c|c|}
\hline
&a&b&c&d						\\
\hline
a&	&	1&	2& 1					\\
\hline
b&	&	&	1&2					\\
\hline
c&	&&&1					\\
\hline
d&&&&						\\
\hline
\end{tabular}}
\end{center}
\end{table}%

To complete the picture, it is worth mentioning that not all metric spaces are 4-embeddable in \threed {}; it is therefore necessary to make a proper assessment of the space in question before using Hilbert Exclusion.

Figure \ref{fig_non_embed_graphs} shows two example of  graphs (taken from \cite{MatouzekDimRedDist}) where the distance between two nodes is defined as the minimum number of paths that must be traversed. This is a proper metric; the node-to-node distances are given in Table \ref{table_non_embeddable}. It is immediately apparent that neither of these sets of four points are isometrically 4-embeddable in  \threed.

For the star graph, consider $p$ as the centre of a sphere on which the other points lie; however as the maximum distance between any two points on a sphere of radius 1 is 2, which occurs only when they lie at either end of a diameter, then  no such three points can exist in three dimensions.
 
 Similarly for the Hamming cube; if the diagonals are fixed at length 2, then at least one of the sides much have a length of no less that \smash{$\sqrt 2$}.

Some common distances, for example Chebyshev and Manhattan  distances, are not Hilbert embeddable. It is straightforward to show that Manhattan distance itself is conditionally negative semi-definite, and therefore the space \smash{$(\mathbb{R}^n,\sqrt M)$}, where $M$ is the Manhattan  distance is Hilbert embeddable. 
More  generally, in \cite{blumenthal1953} it is proved that if $(X,d)$ is a metric space than $(X, d^\alpha)$, with $0\leq \alpha\leq 1/2$, is isometrically 4-embeddable in $\ell^3_2$ and so $(X, d^\alpha)$ can be searched using the Hilbert Exclusion.
 However  for practical purposes the advantages of using Hilbert Exclusion are likely to be outweighed by a huge increase in intrinsic dimensionality.

Levenshtein distance, used for example in text processing and computational biology, is well known to be a proper metric.
In \cite{toth2004handbook} it is stated 
\begin{quote}
``not much is known about embeddability  of  this metric in normed spaces \dots  It is known however that the Levenshtein metric, restricted to a certain set of strings, is isomorphic to the shortest path metric over $K_{2,n}$"
\end{quote}
Therefore Levenshtein distance is not isometrically embeddable in a Hilbert space.
\section{Analysis}
\label{sec_performance_analysis}
As Hilbert Exclusion  is strictly weaker than Hyperbolic Exclusion, the performance of any partition-based indexing mechanism is always better. The distance between the pivot points is required as well as the distance between each pivot and the query, however this may always be calculated during the building of any indexing structure and adds nothing to the cost of a query. Query evaluation cost is totally dominated by the number of dynamic distance calculations required and the use of memory where the objects are large; the minor increase in arithmetic cost, and the extra space required to store the distance between pivots, do not make any significant difference to the query cost.

The many different index mechanisms reported show that performance is highly dependent on many factors, not least the cost of a distance calculation, the size of the objects, and other factors including the intrinsic dimensionality and the distribution of the data within the space. Furthermore most of the more sophisticated mechanisms use a mixture of hyperplane and cover radius exclusion; it may be that enhanced performance of hyperplane exclusion could make a significant difference to the choice of index. It is not therefore possible to analyse a simple ``performance improvement" in general terms.

We therefore give analysis of the improved exclusion condition as follows.

\begin{enumerate}
\item Exclusion power: for a given finite space, we randomly select pairs of pivot points that partition a space into two halves. The exclusion power of each  mechanism can then be measured as the probability of a randomly-selected query being able to avoid searching either half of the space based only on its distance  from the two points. This is always greater for Hilbert Exclusion than for Hyperbolic Exclusion; in Section \ref{subsection_power} we give figures for  various spaces.

\item Improvement: for a given metric space, simple data structures relying primarily on hyperplane partitioning are built, namely a generalised hyperplane tree and a monotonous hyperplane tree. The same index structures can be used with either Hilbert or Hyperbolic exclusion; improvement is measured as a simple multiplicative factor between the two. We give results in Section \ref{subsection_jmprovement}.

\item Real-world data: The SISAP forum%
\footnote{www.sisap.org}  publishes a number of large data sets drawn from real world contexts which are commonly used as benchmarks for different indexing mechanisms. Results over these have been reported for many different indexing mechanisms. We take the best of these mechanisms, which uses both radius and hyperplane exclusion, and compare it using Hyperbolic and Hilbert exclusion mechanisms. Results for this are given in Section \ref{subsection_real_world}.

\end{enumerate}

\subsection{Experimental Method}

Any exclusion mechanism works well within a context of low dimensionality spaces and small query thresholds. To give a general overview of the tradeoffs, we perform all tests over a variety of spaces and thresholds.

 In all cases, we generated pseudo-random data sets of one million elements within the unit hypercube, evenly distributed within each dimension, within $\mathbb{R}^d$ for $d \in \{6,8,10,12,14\}$. In the results presented we name the spaces used based on the metric and the number of Cartesian dimensions, eg {\small $\mathtt{euc\_10}$} for Euclidean distance over $\mathbb{R}^{10}$, {\small $\mathtt{jsd\_12}$} for Jensen-Shannon distance%
 \footnote{for {\small $\mathtt{euc}$} and {\small $\mathtt{tri}$}, each point is normalised so that $\sum_i v_i = 1$ }
  over $\mathbb{R}^{12}$ etc.

Search thresholds were derived by experiment, for each space, as those which would return around $n$ results per million data, for $n \in \{1,2,4,8,16,32\}$.

For each space we also calculated the Intrinsic Dimensionality (IDIM, \cite{Chavez:2001}), generally believed to give a good ``rule of thumb" impression of how tractable a space is to metric indexing techniques; folklore indicates that spaces with an IDIM of greater than around 6 are challenging, and those with an IDIM of greater than about 10 are intractable\footnote{There is no very clear scientific evidence for this that we know of, but the opinion is widely held among researchers at venues such as SISAP}. IDIM is defined over a sample of distances calculated over randomly selected points from within the space, based on the mean $\mu$ and standard deviation $\sigma$ of these distances, as \smash{$\frac{\mu^2}{2 \sigma^2}$}.

Table \ref{table_idims_thresholds} in Appendix \ref{section_appendix_1} gives values for IDIM and thresholds calculated for each space. Given these values, all experimental results  are  obtainable through repetition of the experiments described. All results are independent of the computer upon which they are performed, and all figures presented represent mean values where experiments were repeated until the standard error of the mean was less than 1\% of the value given.

\subsection{Exclusion Power}

\label{subsection_power}

\begin{figure}
\centering
\fbox{\includegraphics[width=0.7\columnwidth]{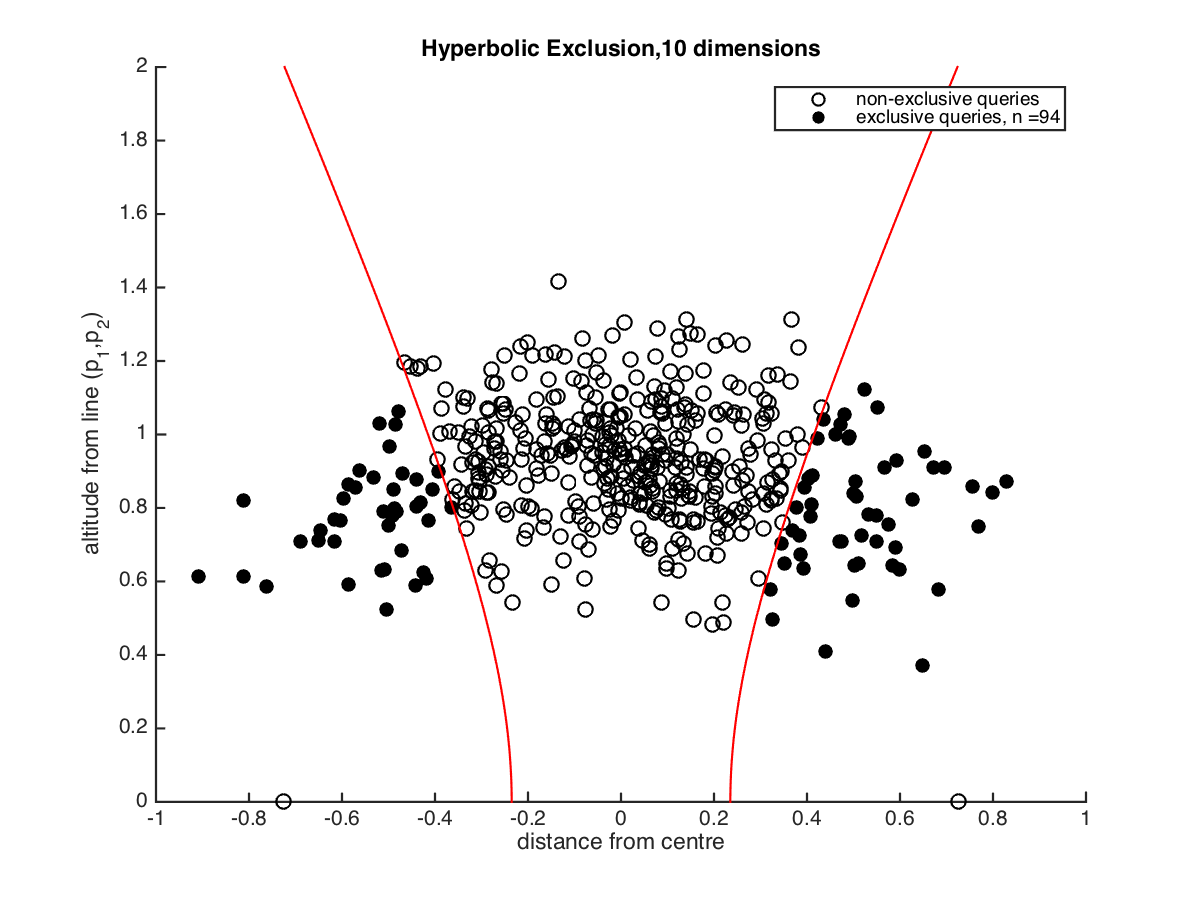}}

\caption{Hyperbolic Exclusion. Two points are chosen at random from a finite space and placed symmetrically on the X axis, either side of the origin, separated by the distance between them in the original space. The remaining points are plotted in the upper half of the space according to their distance from these two points. Relative distances among these points are not significant as each point represents a different embedding function. Those coloured solidly are those which, were they queries, would allow the semispace on the opposing side to be excluded from a search.}
\label{fig_illustrative_voronoi}
\end{figure}

\begin{figure}
\centering
\fbox{\includegraphics[width=0.7\columnwidth]{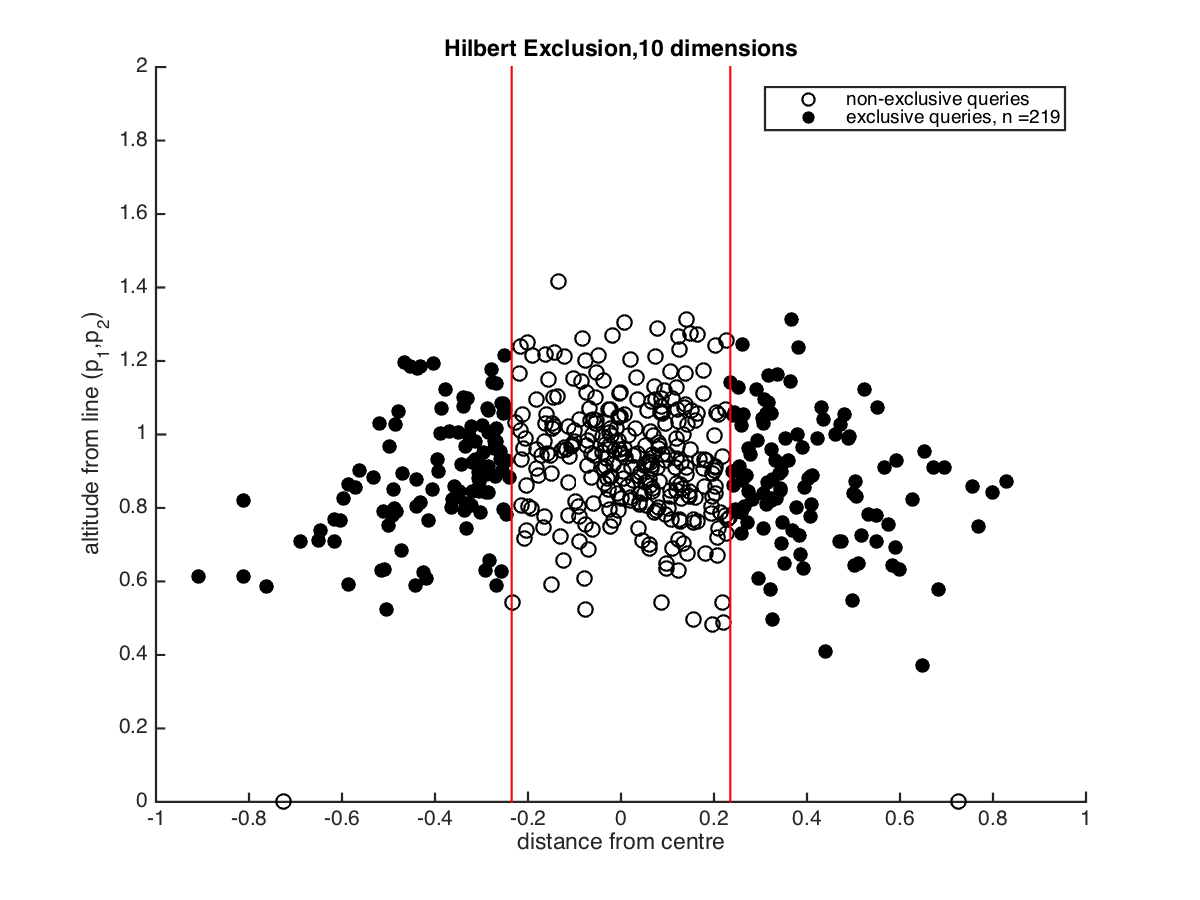}}

\caption{Hilbert Exclusion. The same plot as in Fig \ref{fig_illustrative_voronoi}; the solidly coloured points  represent queries that allow the opposing semispace to be excluded using Hilbert Exclusion. These are now all points at least the threshold distance from the separating hyperplane, which includes many more queries for the same threshold. }
\label{fig_illustrative_cosine}
\end{figure}

\begin{figure}[htbp]
\centering
\fbox{\includegraphics[width=0.7\columnwidth]{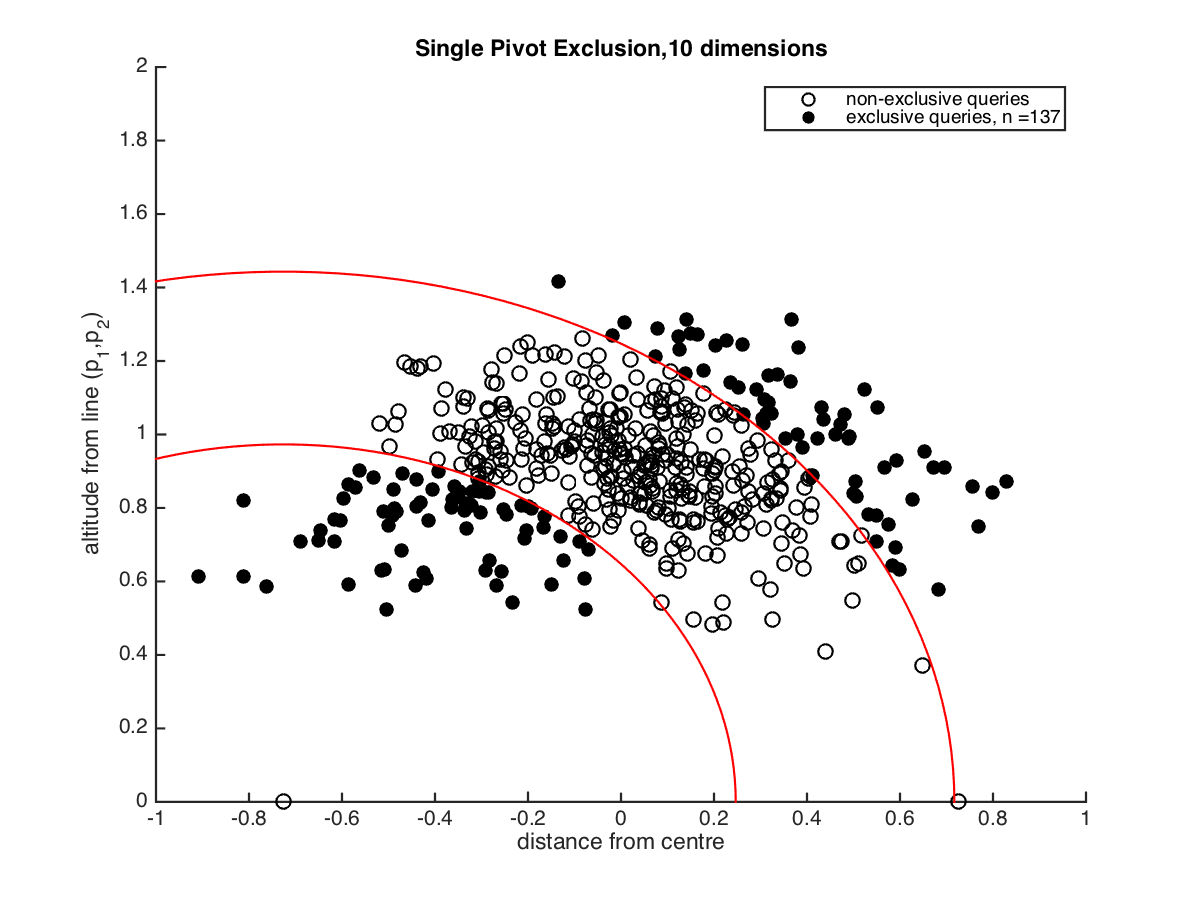}}

\caption{Pivot Exclusion. The same plot as Figs \ref{fig_illustrative_voronoi} and \ref{fig_illustrative_cosine}, but now the left-hand point on the  X axis is used to exclude queries based on distance from that alone. Semispaces are defined according to the median distance from this point, and solidly coloured points indicate those whose distance from the pivot point is more than the query threshold away from this median. }
\label{fig_illustrative_pivoting}
\end{figure}

Figures \ref{fig_illustrative_voronoi}, \ref{fig_illustrative_cosine} and \ref{fig_illustrative_pivoting} illustrate the exclusion power test. Each figure shows the same set of 500 randomly generated points in a 10-dimensional  Euclidean space. A futher two points are also generated to act as pivots.

In Figures \ref{fig_illustrative_voronoi} and \ref{fig_illustrative_cosine}, the distance between the pivot points is measured as $d$; an embedded 2D plane is then constructed with these points at $(0,-1/2d)$ and $(0,1/2d)$ respectively. Each point in the generated set is then measured against these two points, and plotted in the upper half of the plane according to these distances. It can be seen that the same points are plotted in both figures. Note that the relative distances within the plot are of no significance; each point represents a different embedding function. However the position of each point within the space is individually significant with respect to the pivot points.

A query radius is chosen, in this case one that would be expected to return around one point per million from a large set. Figure \ref{fig_illustrative_voronoi} highlights those points which satisfy  the Hyperbolic Exclusion condition, and Figure \ref{fig_illustrative_cosine} highlights those which satisfy  Hilbert Exclusion. As well as noting the number is substantially greater (201 against 75 in this example) it is instructive to note the shape of the exclusion zones within the two figures; Figure \ref{fig_illustrative_voronoi} clearly shows the shape of the hyperbola which demarcates the zone, whereas Figure \ref{fig_illustrative_cosine} clearly shows   parallel lines either side of the central axis.

To give a reference diagram for single pivot-based exclusion, Figure \ref{fig_illustrative_pivoting} gives the same plot but highlights those which are more than the same query threshold from the median distance to the left-hand pivot point, which are those that could be excluded according to radius-based exclusion from this point alone; there are 139 of these in this case.

In all spaces that we have measured, the single-pivot method has more exclusion power than Hyperbolic exclusion, but less power than Hilbert Exclusion. In metric indexing things are not this simple, as in particular hyperplane separation is normally used to effect in conjunction with  cover radius exclusion. The greater exclusion potential of Hilbert Exclusion requires two distance calculations, against a single calculation for pivot-based exclusion; however many indexes have ways of amortising this extra cost. Finally, plane partitioning is very effective when the space is amenable to geometric separation, as it tends to cluster subsets which are relatively closer to each other, whereas ball partitioning tends to be less effective in this respect.

 In all  there is a hint that, when applicable, the new condition appears to enjoy the best of all worlds in this respect; at least it may make a significant difference to the choice of mechanism for a given data set, and may possibly inspire new mechanisms to be developed.

\subsubsection{Results}

 Table \ref{table_power_test_results} in Appendix \ref{section_appendix_exc_pow} gives outcomes of the exclusion power test for the three given Hilbert-embeddable metrics over spaces of various dimensions, using various query thresholds. These results are graphically summarised in Figure \ref{fig_power_improvement} for Euclidean spaces; the other two metrics give very similar patterns. The left-hand figure shows the exclusion percentage obtained at various dimensions and thresholds; it can be seen that Hilbert Exclusion performs much better than Hyperbolic Exclusion, and is much more tolerant to increases in both dimensionality and query threshold; that is, it performs relatively better as the space becomes less tractable.
 
 The right hand graphs illustrates this in terms of improvement of Hilbert over Hyperbolic exclusion, which again can be seen to increase sharply as the space becomes less tractable.

\begin{figure}[!t]

\makebox[\textwidth]{
\fbox{\includegraphics[width=0.45\columnwidth]{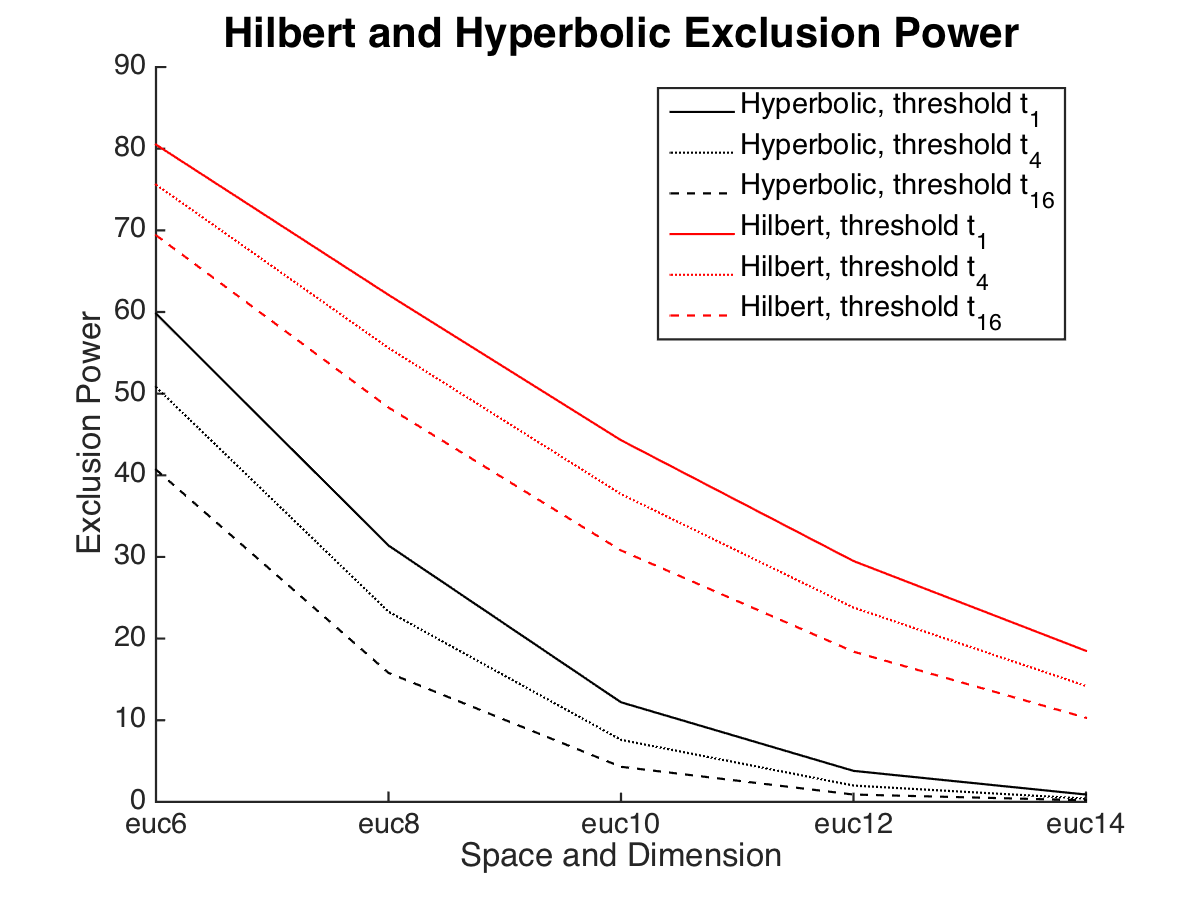}}
\fbox{\includegraphics[width=0.45 \columnwidth]{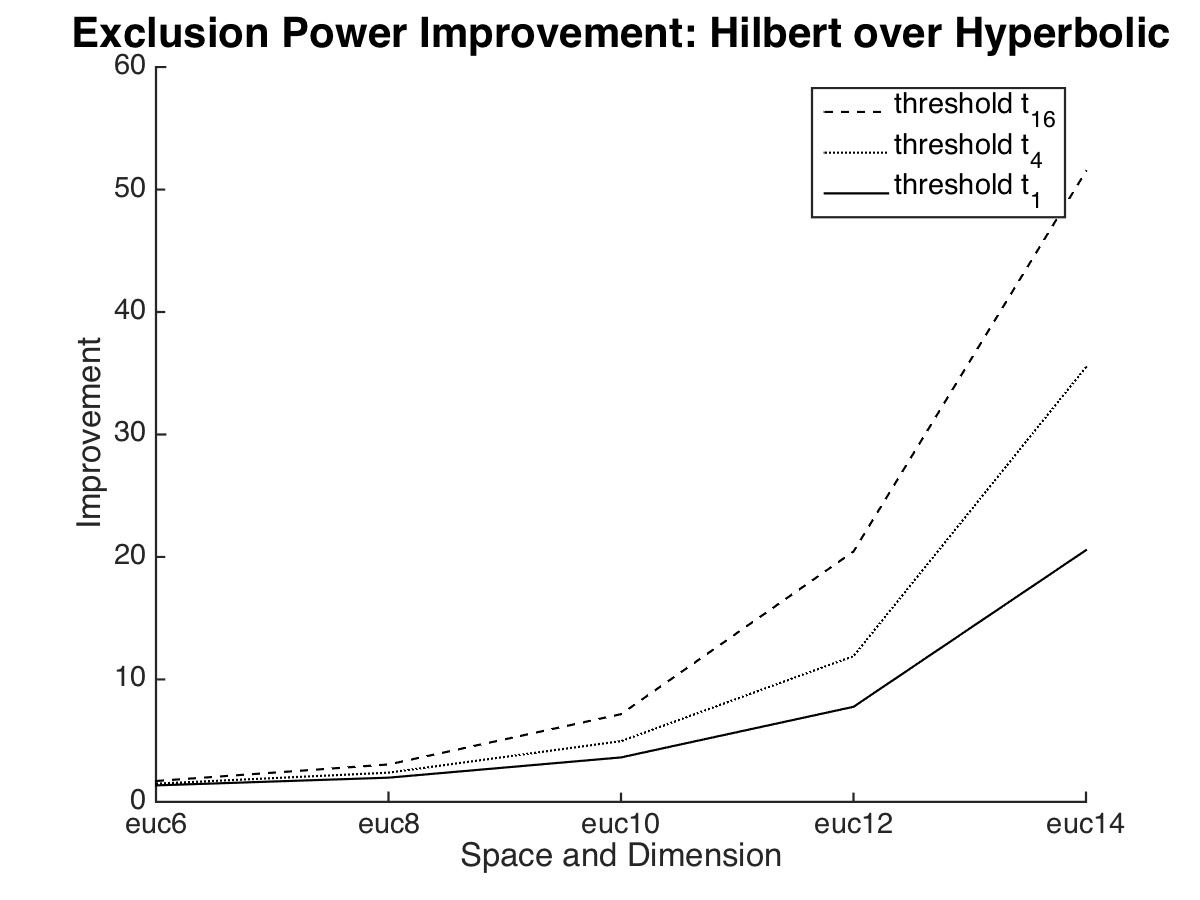}}
}

\caption{Exclusion Power tests: Each figure shows five different dimensionalities, and three different search thresholds, for Euclidean  spaces. Jensen-Shannon and Triangular spaces  gives similar results. Left figure is percentage exclusion, right is relative improvement of Hilbert over Hyperbolic. }
\label{fig_power_improvement}
\end{figure}

\subsection{Improvement}

\label{subsection_jmprovement}

To give  a more  practical measurement of performance improvement, the two exclusion mechanisms have also been tested over metric indexes built over actual data sets. The indexes used are the general hyperplane tree  (GHT, \cite{GHT}) and the monotonous hyperplane tree (MHT, \cite{Noltemeier1992})%
\footnote{Originally named the ``Monotonous Bisector* Tree"}%
, which are in a sense the most ``pure" (and certainly the simplest) hyperplane indexing structures.  In these experiments, for each data set used the same data structure is created, the only difference is in the exclusion mechanism used.

It should be noted here that  the notions of ``bisector" and ``hyperplane" tree are conceptually different; although they share the same construction algorithm, bisector trees use a cover radius for pivot-based exclusion, and hyperplane trees use, normally, hyperbolic exclusion. In our experiments we use both cover radius and hyperplane exclusion mechanisms, as would be normal in practice, and compare the use of hyperbolic exclusion with Hilbert exclusion.

\subsubsection{Results}
Table \ref{table_query_results} in Appendix \ref{section_appendix_exc_pow}  shows, for various metrics and dimensionalities, the cost of indexing two  hyperplane-based metric index structures with the different exclusion strategies. 
Figure \ref{fig_performance} shows some of the results in graphical form.

 It can be seen that, for all spaces, Hilbert Exclusion always gives better performance than Hyperbolic Exclusion; this is expected, as the exclusion condition is strictly weaker. Table \ref{table_query_results} shows that, under Hyperbolic Exclusion,  the MHT always gives marginally improved performance over the GHT; again, this is already known and understood. It can also be seen that the GHT under Hilbert Exclusion  gives equal or better performance than the MHT under Hyperbolic Exclusion.  Interestingly however, the improvement given by using Hilbert Exclusion over the MHT  is  dramatically better than the improvement given over the GHT, for which we do not currently have a reason.

Another interesting observation is shown on the right of Figure  \ref{fig_performance}, which gives the ratio of the number of distances calculated by the MHT for the two exclusion mechanisms; it can be seen that, for all search thresholds, this reaches a maximum at around 10 dimensions and then decreases again. This can be explained by the fact that, for very tractable spaces,both mechanisms function very well; there is not therefore a great improvement. For intractable spaces, neither mechanism can do well and so again the relative improvement becomes less. The observation is in keeping with the left hand diagram shown in Figure \ref{fig_power_improvement}, where it be seen that the gap in  exclusion power of the two mechanisms is greatest at around the same range of dimensions.

\begin{figure}[!t]

\makebox[\textwidth]{
\fbox{\includegraphics[width=0.45\columnwidth]{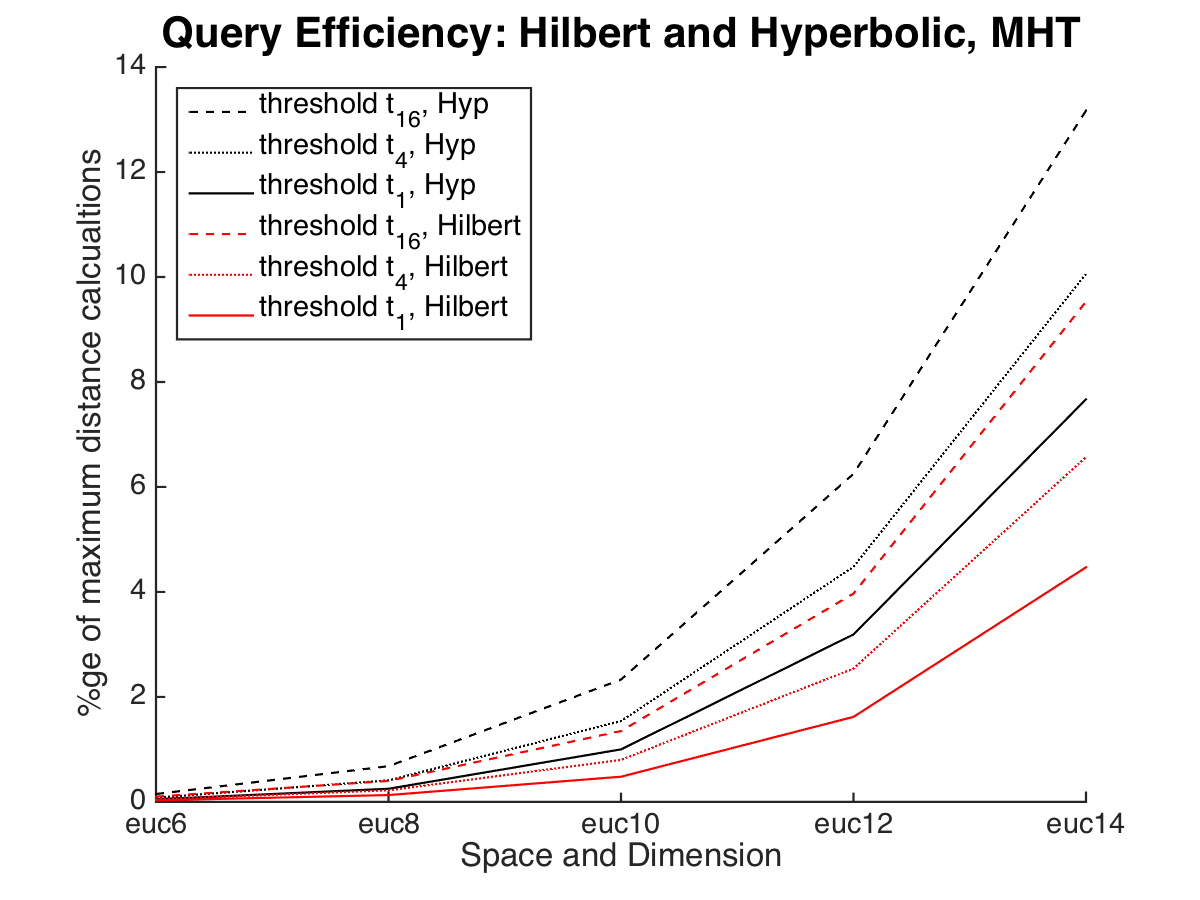}}
\fbox{\includegraphics[width=0.45 \columnwidth]{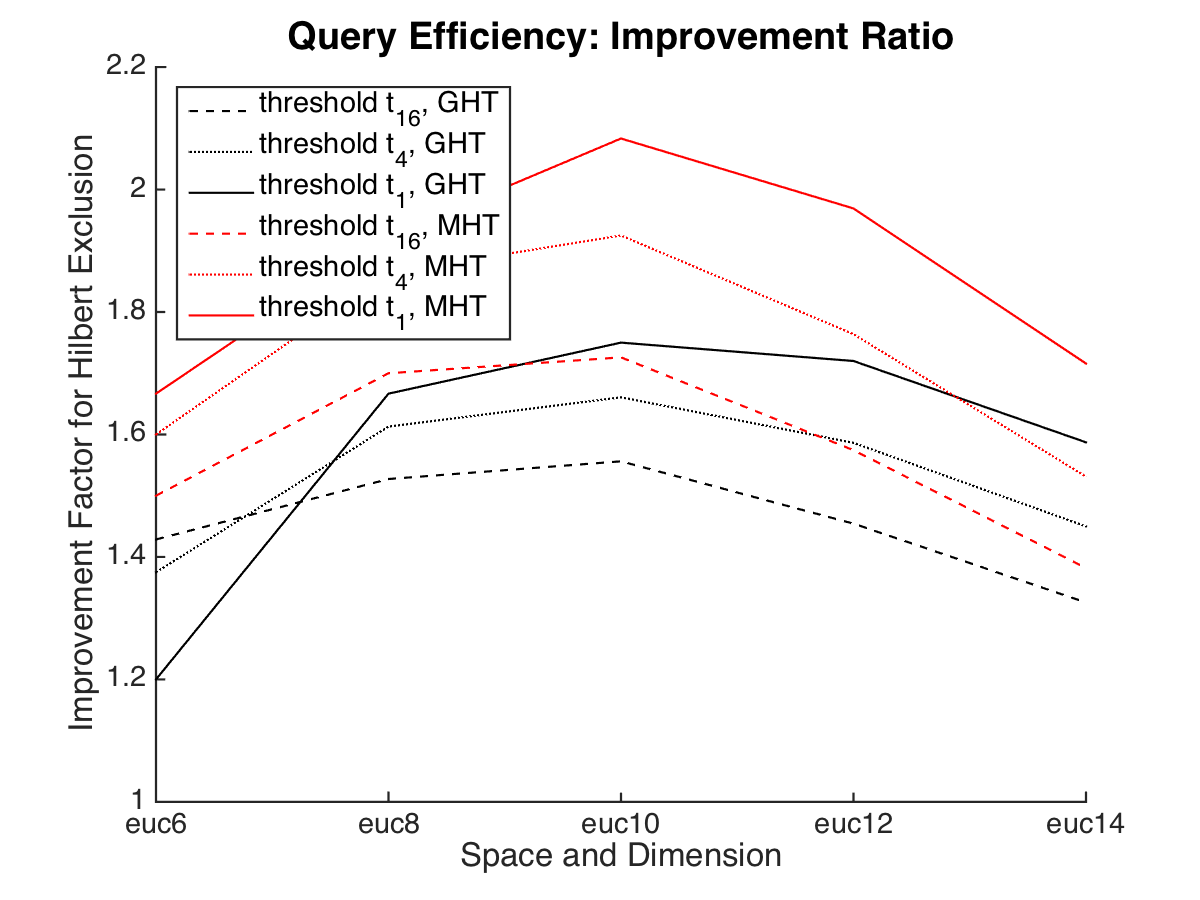}}
}

\caption{ The left hand graph shows absolute performance data for the MHT, at various dimensionalities and thresholds, for the two exclusion mechanisms. The right hand graph shows the same results interpreted as an improvement ratio, and also includes data from the GHT.  In the left hand graph, lines of the same pattern represent the same data, and the same index structures, only the query exclusion mechanism is different.}
\label{fig_performance}
\end{figure}

\subsection{``Real-world" data}

\label{subsection_real_world}

There are many different contexts for metric search, and no mechanism is generally believed to be best for all purposes. The most competitive comparator at the time of writing is the Distal Spatial Approximation Tree (DiSAT) \cite{dSatIS} which has been shown to perform better than a large range of other mechanisms. The authors write:

\begin{quote}
``Our data structure has no parameters to tune-up and a small memory footprint. In addition it can be constructed quickly. Our approach is among the most competitive, those outperforming DiSAT achieve this at the expense of larger memory usage or an impractical construction time."
\end{quote}

We can therefore take this mechanism as the state of the art in metric indexing, and as it uses hyperplane partitioning we can test the effect of applying Hilbert Exclusion against the Hyperbolic Exclusion with which it has been defined. In their publication, the authors test the DiSat very extensively and it is in almost all cases the best performing index.

The SISAP forum%
\footnote{www.sisap.org}  publishes a number of large data sets drawn from real world contexts which are commonly used as benchmarks for different indexing mechanisms, and results for the DiSAT were given with respect to these. We have implemented the DiSAT as described  in \cite{dSatIS} and measured the same  results over  Euclidean spaces; therefore we need only compare this structure with the two different exclusion criteria.

The same experimental context was used: the SISAP ``colors" and ``nasa" data sets are used to build instances of DiSATs. In each case ten percent of the data is used as queries over  remaining 90 percent of the set, at  threshold values  which return 0.01, 0.1 and 1\% of the data sets respectively.

Figure \ref{fig_sisap_sat_comparison} shows the outcome of these experiments. It is clear that using Hilbert exclusion greatly improves the performance. 

\begin{figure}[!t]

\makebox[\textwidth]{
\fbox{\includegraphics[trim=10mm 70mm 10mm 80mm, width=0.45\columnwidth]{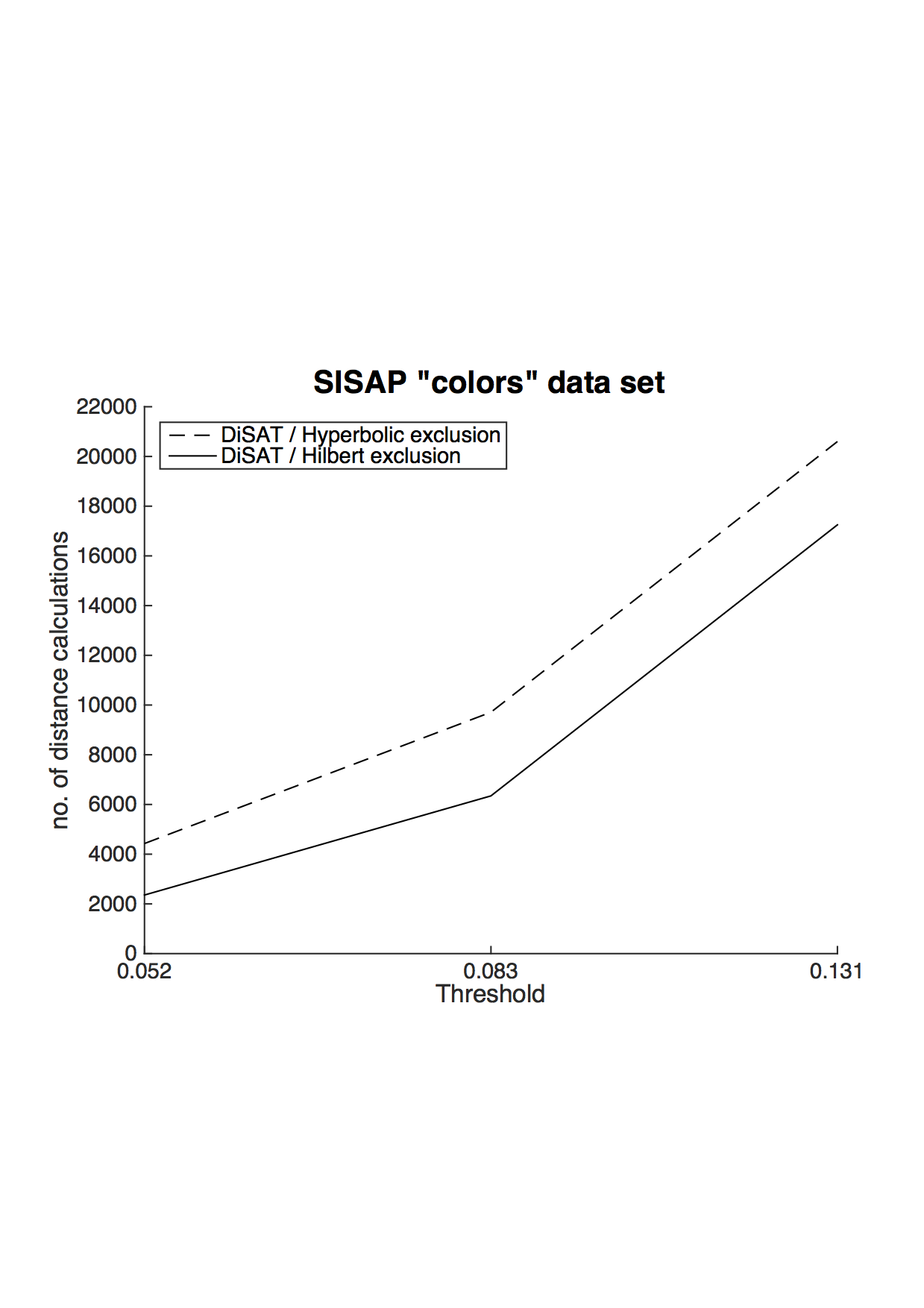}}
\fbox{\includegraphics[trim=10mm 70mm 10mm 80mm, width=0.45 \columnwidth]{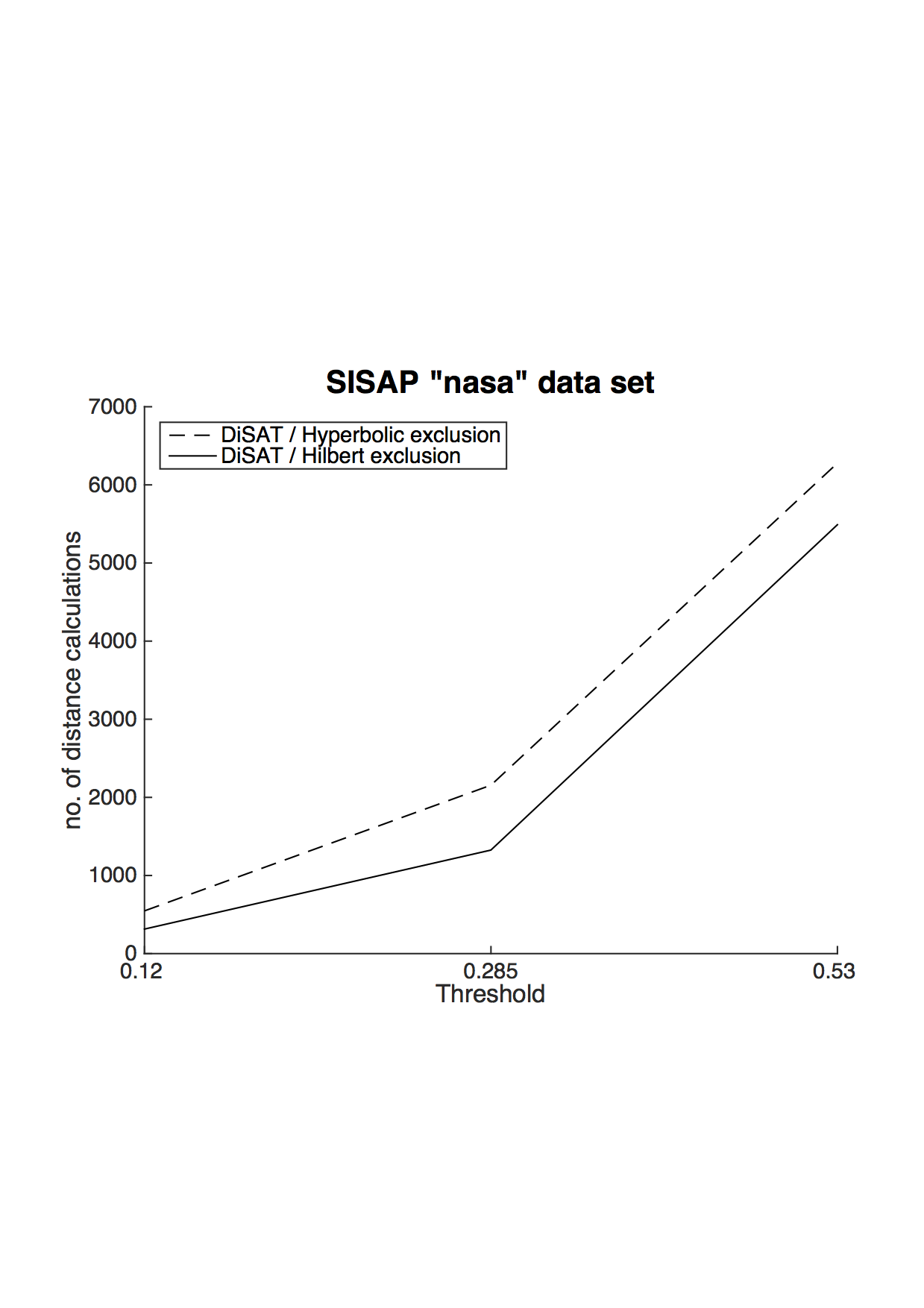}}
}

\caption{Comparing Hyperbolic and Hilbert Exclusion Conditions for the DiSAT. The two graphs represent benchmark applications of the ``state of the art"  DiSAT index over SISAP benchmark data sets, with significant improvements achieved through  changing only the exclusion condition. }
\label{fig_sisap_sat_comparison}
\end{figure}

\subsection{Correctness}
It is finally worth mentioning that during the course of the  experiments described in this paper, over one million queries have been executed over sets of at least one million data using a number of different indexes, including those using both Hyperbolic and Hilbert exclusion; all queries over the same sets, using different mechanisms, have been checked against each other and in all cases the results were identical. While we are confident about the correctness of the mathematical derivations given, it is nonetheless comforting to have such experimental validation.

\section{The Effects of Increasing Dimensionality}
\label{section_increasing_dimensionality}

The results given have shown how the relative advantage of Hilbert Exclusion over Hyperbolic Exclusion increases as the spaces become less tractable, that is as the intrinsic dimensionality increases.

A reason for this can be seen from studying the geometry of the two mechanisms in the three dimensional embeddings. As the dimensionality increases, there are three well-known effects: the mean distances between randomly sampled points increases;  the standard deviation  of these distances decreases,and  query thresholds greatly increase.
This last gives the greatest effect  in terms of the tractability of indexing mechanisms, and is an effect of the relative ratio of the volume of the unit hypercube and the unit hypersphere as dimensions increase. The volume of the unit hypersphere in $2k$ dimensions is \smash{$ \frac{\pi^k}{k!}$}, which decreases very rapidly after three dimensions, whereas  the volume of the unit hypercube remains as 1, independent of the dimension.

As can be seen from Table \ref{table_idims_thresholds}, in 6-dimensional Euclidean space the radius of a hypersphere with a volume of $10^{-6}$ is 0.076; in 14-dimensional Euclidean space it is 0.386. This has the effect of not only making the hyperbola wide, but also causing it to veer sharply away from the central hyperplane.

Figure \ref{fig_six_fourteen_dims} illustrates this effect by illustrating the siutation in both 6 and 14 dimensions for a small set of 500 randomly generated points in the unit hypercube.

\begin{figure}
\makebox[\textwidth]{
\includegraphics[width=0.6\columnwidth]{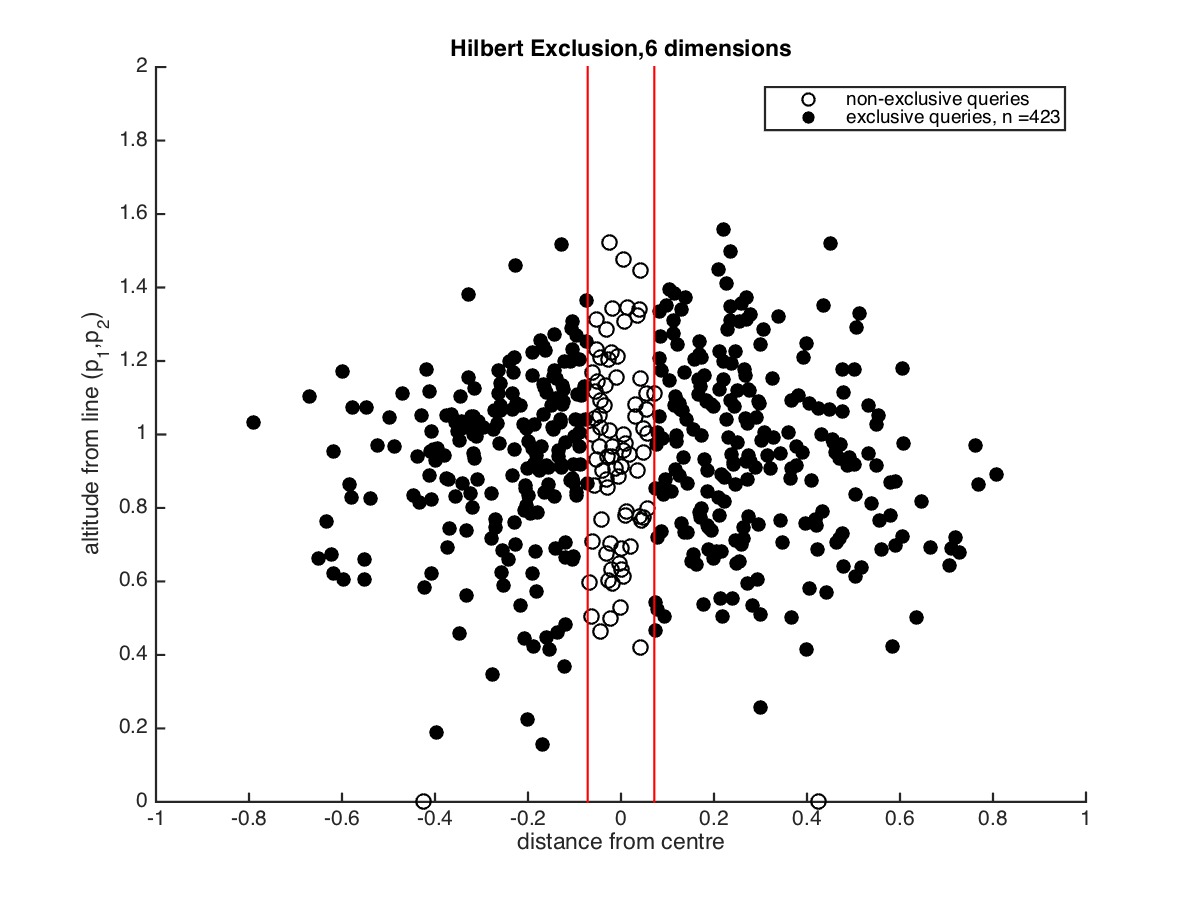}
\includegraphics[width=0.6\columnwidth]{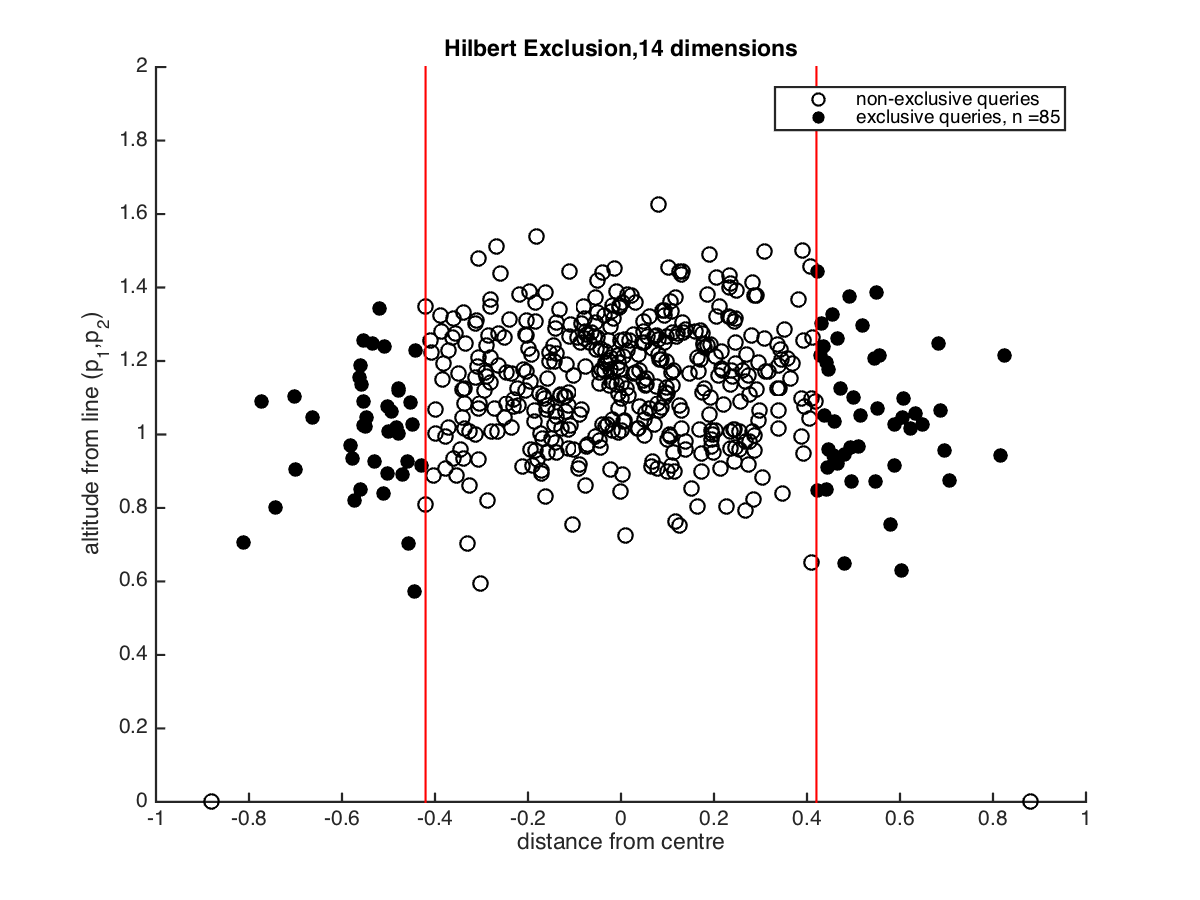}
}
\makebox[\textwidth]{
\includegraphics[width=0.6\columnwidth]{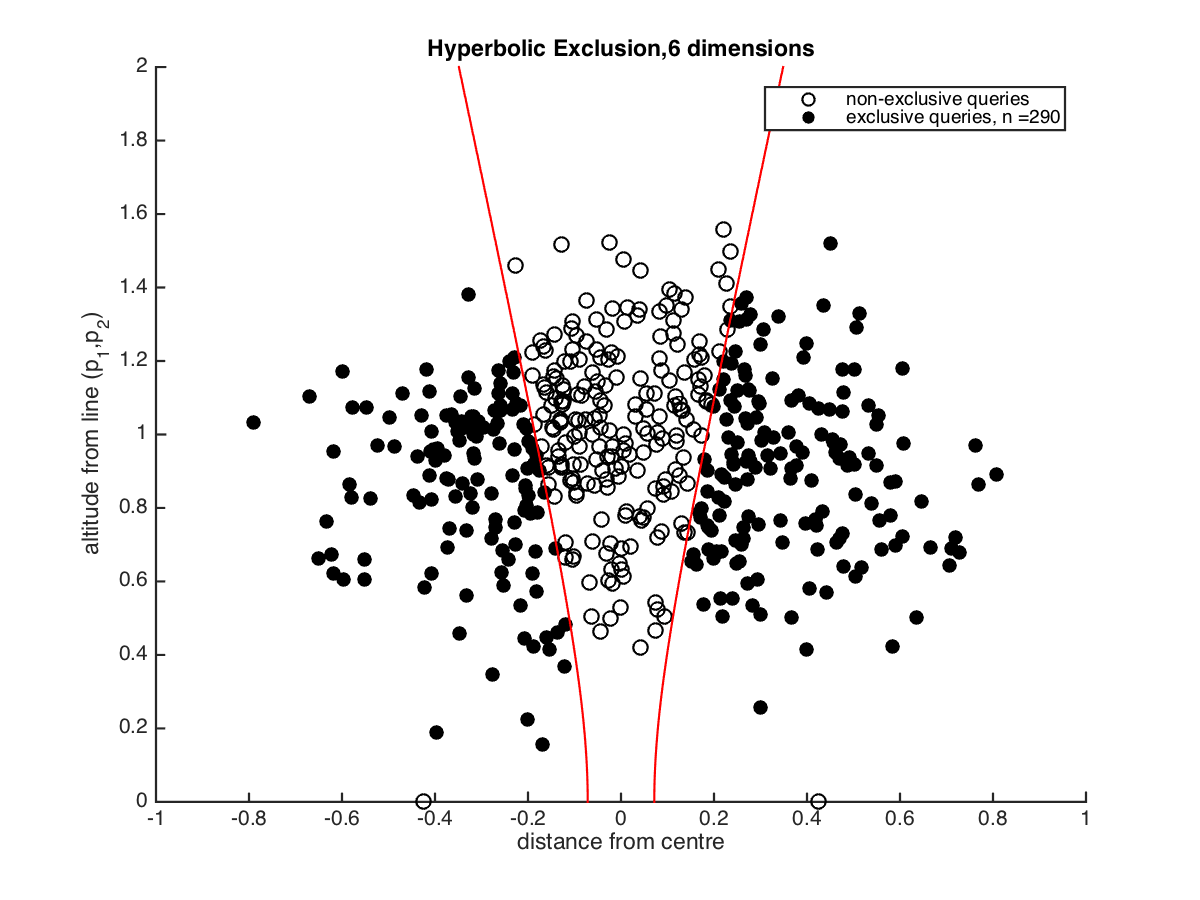}
\includegraphics[width=0.6\textwidth]{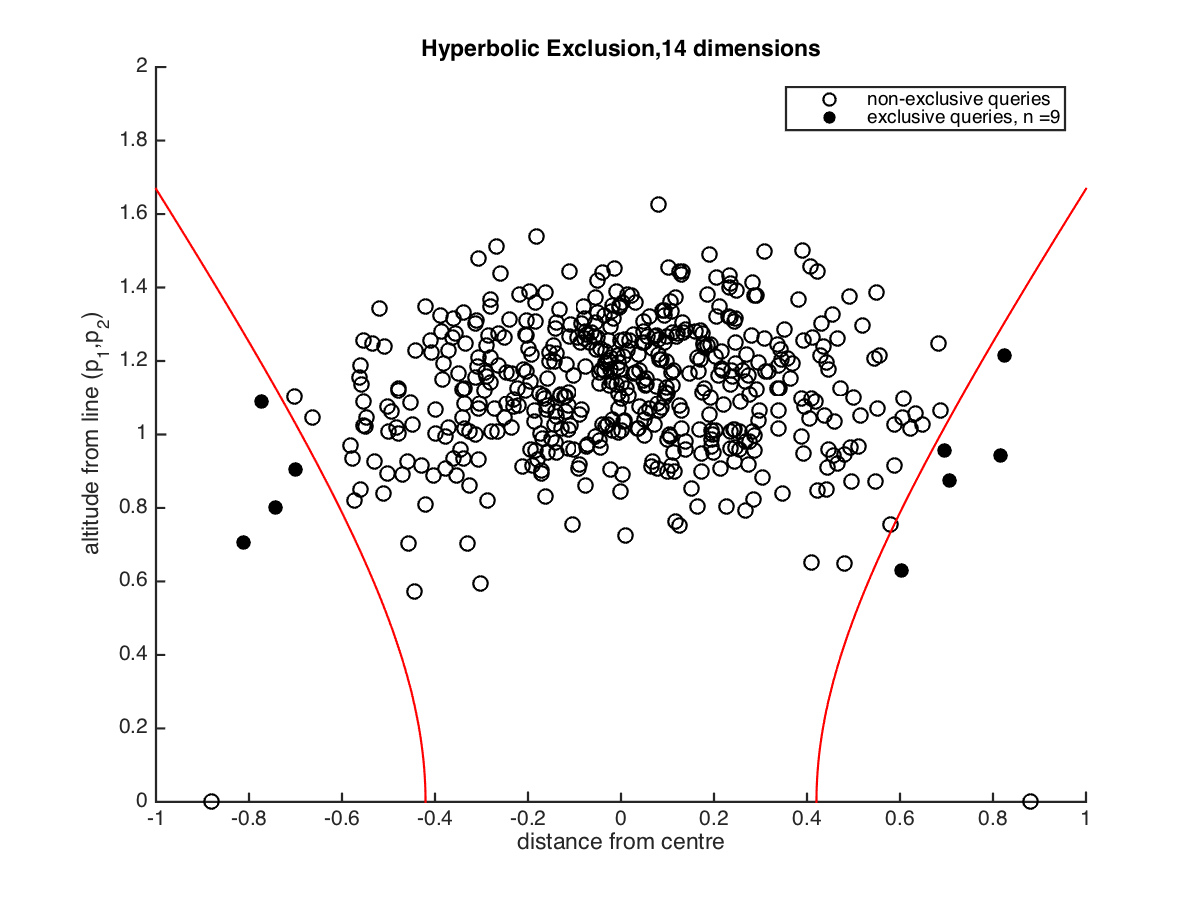}
}
\makebox[\textwidth]{
\includegraphics[width=0.6\columnwidth]{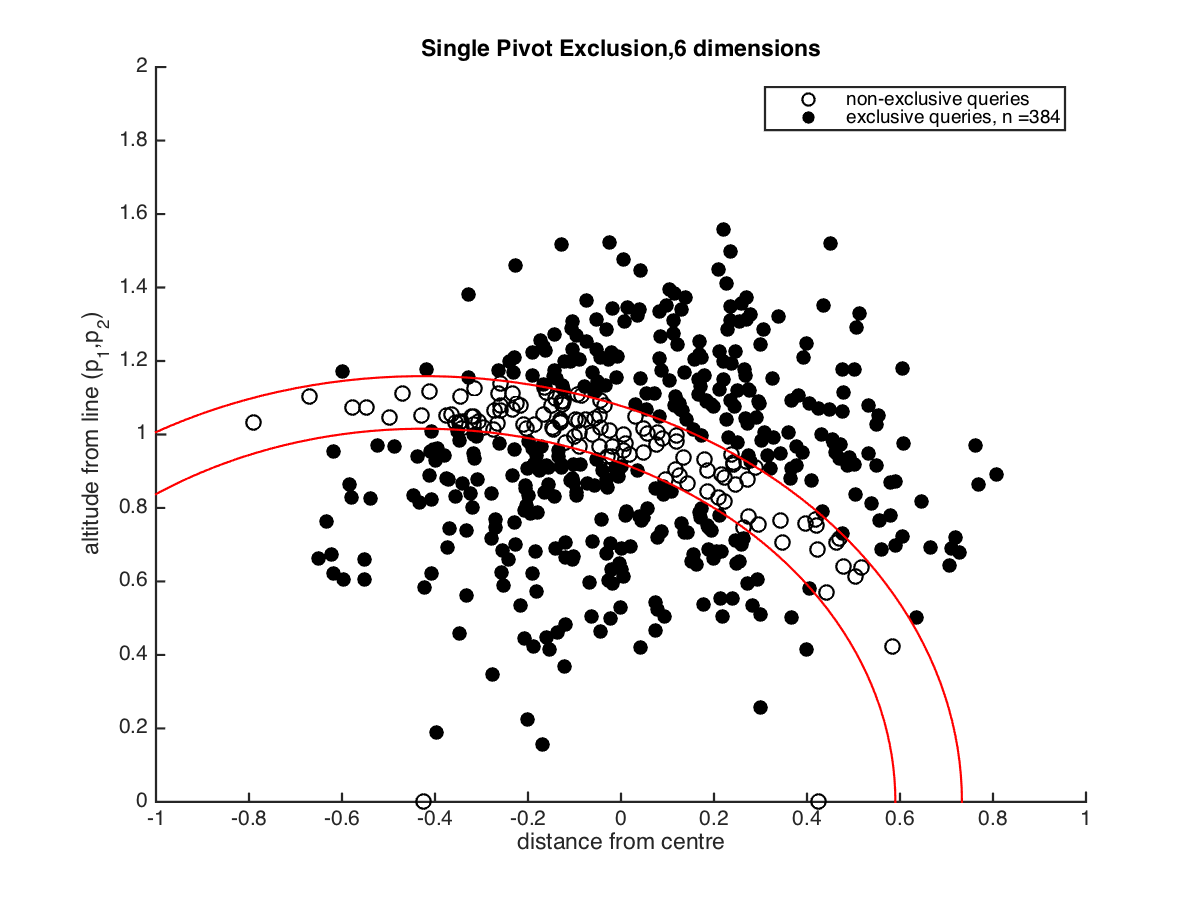}
\includegraphics[width=0.6 \columnwidth]{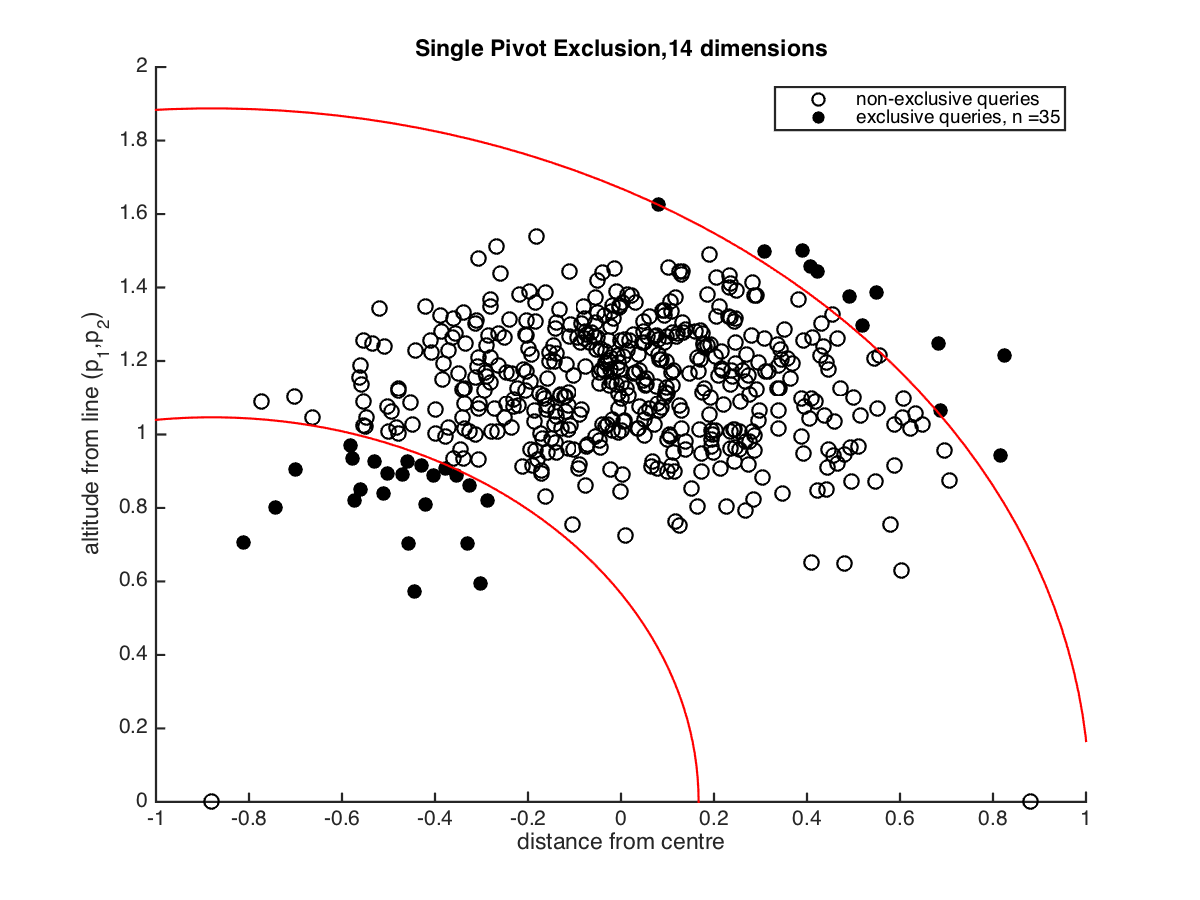}
}

\caption{The effect of dimensionality increase on the three ``power" plots. At 6 dimensions (left hand column)  the   hyperbola can be clearly seen to disadvantage Hyperbolic Exclusion (middle row) against the parallel lines given by Hilbert Exclusion (top row.) At 14 dimensions however, Hyperbolic Exclusion excludes only a handful of  points, and Hilbert Exclusion  achieves significantly  more exclusion than single-point pivoting (bottom row.) Query thresholds are chosen to return one per million objects. }
\label{fig_six_fourteen_dims}
\end{figure}

\section{Conclusions and Further Work}

We have shown that many common metric spaces have a further, stronger, property: namely, as well as the ability to isometrically embed any three points in two-dimensional Euclidean space, they also have the ability to isometrically embed any four points in three-dimensional Euclidean space. We have shown how the stronger geometric guarantee allows more effective metric indexing, and also that any metric space which is isometrically embeddable in Hilbert space has the stronger property. Such spaces include those most commonly used, including spaces of any dimension governed by Euclidean, Jensen-Shannon, Triangular or Cosine distance.

We have shown that, for such spaces, the most popular, state-of-the-art indexing mechanisms have significantly better performance, and that the improvement increases as the dimensionality of the space increases, which is an important result in this field.

However we believe that the so-called four point property will turn out to also be of   value in other areas of similarity search. Although  not yet fully investigated, we have included here the observation that our Hilbert Exclusion has better properties than normal pivot-based exclusion over a single object, and while Hilbert exclusion has the disadvantage of requiring two reference points, it has been seen (for example in monotonous bisector trees) how this extra cost can be amortised by reusing the pivot points. We have also made some early but promising observations that the four-point property can be used to effect beyond indexing structures, for example in the use of locality-sensitive hashing and permutation ordering, which we are currently investigating further.

In essence, almost the entire literature of metric search is based upon the property of  3-embeddability in two dimensional space; almost every derived result in the whole domain can be usefully re-examined in terms of the stronger property of 4-embeddability in three dimensional space.

Finally, it is also the case that any Hilbert space with the four-point property in fact has the ability to embed any $n$ points with $(n-1)$-dimensional Euclidean space; we are currently trying to understand if this property gives rise to further uses within metric indexes.

\label{sec_conclusions}

\bibliographystyle{plain}
\bibliography{bib/connor,bib/cos_projection}

\begin{thebibliography}{10}

\bibitem{kolmogorov_textbook}
A.~D. Aleksandrov, A.~N. Kolmogorov, and M.~A. Lavrent'ev.
\newblock {\em {Mathematics: Its Content, Methods and Meaning (Dover Books on
  Mathematics)}}.
\newblock Dover Publications, July 1999.

\bibitem{blumenthal1933note}
Leonard~M. Blumenthal.
\newblock A note on the four-point property.
\newblock {\em Bulletin of the American Mathematical Society}, 39(6):423--426,
  1933.

\bibitem{blumenthal1953}
Leonard~Mascot Blumenthal.
\newblock {\em {Theory and applications of distance geometry}}.
\newblock Clarendon Press, 1953.

\bibitem{dSatSisap}
Edgar Ch{\'a}vez, Ver{\'o}nica Ludue{\~{n}}a, Nora Reyes, and Patricia Roggero.
\newblock Faster proximity searching with the distal {SAT}.
\newblock In Agma Juci~Machado Traina, Caetano Traina, and Robson
  Leonardo~Ferreira Cordeiro, editors, {\em Similarity Search and Applications
  - 7th International Conference, {SISAP} 2014, Los Cabos, Mexico, October
  29-31, 2014. Proceedings}, Lecture Notes in Computer Science, pages 58--69.
  Springer International Publishing, 2014.

\bibitem{dSatIS}
Edgar Ch{\'a}vez, Ver{\'o}nica Ludue{\~{n}}a, Nora Reyes, and Patricia Roggero.
\newblock Faster proximity searching with the distal {SAT}.
\newblock {\em Information Systems}, pages~--, 2016.

\bibitem{Chavez05}
Edgar Ch{\'{a}}vez and Gonzalo Navarro.
\newblock Metric databases.
\newblock In Laura~C. Rivero, Jorge~Horacio Doorn, and Viviana~E. Ferraggine,
  editors, {\em Encyclopedia of Database Technologies and Applications}, pages
  366--371. Idea Group, 2005.

\bibitem{Chavez:2001}
Edgar Ch\'{a}vez, Gonzalo Navarro, Ricardo Baeza-Yates, and Jos{\'e}~Luis
  Marroqu\'{\i}n.
\newblock Searching in metric spaces.
\newblock {\em ACM Comput. Surv.}, 33(3):273--321, September 2001.

\bibitem{connor:multivariate}
R.~Connor and R.~Moss.
\newblock A multivariate correlation distance for vector spaces.
\newblock In Gonzalo Navarro and Vladimir Pestov, editors, {\em Similarity
  Search and Applications}, volume 7404 of {\em Lecture Notes in Computer
  Science}, pages 209--225. Springer Berlin Heidelberg, 2012.

\bibitem{connor:jsd}
Richard Connor, Franco~Alberto Cardillo, Robert Moss, and Fausto Rabitti.
\newblock {\em Similarity Search and Applications: 6th International
  Conference, SISAP 2013, A Coru{\~{n}}a, Spain, October 2-4, 2013,
  Proceedings}, chapter Evaluation of Jensen-Shannon Distance over Sparse Data,
  pages 163--168.
\newblock Springer Berlin Heidelberg, Berlin, Heidelberg, 2013.

\bibitem{endres:2003}
D.M. Endres and J.E. Schindelin.
\newblock A new metric for probability distributions.
\newblock {\em Information Theory, IEEE Transactions on}, 49(7):1858--1860,
  2003.

\bibitem{SISAP_man}
Karina Figueroa, Gonzalo Navarro, and Edgar Ch\'avez.
\newblock Metric spaces library.
\newblock {www.sisap.org/library/manual.pdf}.

\bibitem{fuglede2004jensen}
Bent Fuglede and Flemming Topsoe.
\newblock Jensen-shannon divergence and hilbert space embedding.
\newblock In {\em IEEE International Symposium on Information Theory}, pages
  31--31, 2004.

\bibitem{MatouzekDimRedDist}
Jiří Matoušek.
\newblock On the distortion required for embedding finite metric spaces into
  normed spaces.
\newblock {\em Israel Journal of Mathematics}, 93(1):333--344, 1996.

\bibitem{menger_1}
Karl Menger.
\newblock New foundation of euclidean geometry.
\newblock {\em American Journal of Mathematics}, 53(4):721--745, 1931.

\bibitem{SAT2002}
Gonzalo Navarro.
\newblock Searching in metric spaces by spatial approximation.
\newblock {\em The VLDB Journal}, 11(1):28--46, 2002.

\bibitem{DSAT}
Gonzalo Navarro and Nora Reyes.
\newblock {\em String Processing and Information Retrieval: 9th International
  Symposium, SPIRE 2002 Lisbon, Portugal, September 11--13, 2002 Proceedings},
  chapter Fully Dynamic Spatial Approximation Trees, pages 254--270.
\newblock Springer Berlin Heidelberg, Berlin, Heidelberg, 2002.

\bibitem{Noltemeier1992}
H.~Noltemeier, K.~Verbarg, and C.~Zirkelbach.
\newblock {\em Data structures and efficient algorithms: Final Report on the
  DFG Special Joint Initiative}, chapter Monotonous Bisector* Trees --- a tool
  for efficient partitioning of complex scenes of geometric objects, pages
  186--203.
\newblock Springer Berlin Heidelberg, Berlin, Heidelberg, 1992.

\bibitem{MIndex2011}
David Novak, Michal Batko, and Pavel Zezula.
\newblock Metric index: An efficient and scalable solution for precise and
  approximate similarity search.
\newblock {\em Information Systems}, 36(4):721 -- 733, 2011.
\newblock Selected Papers from the 2nd International Workshop on Similarity
  Search and Applications \{SISAP\} 2009.

\bibitem{OstVaj03}
F.~{\"O}sterreicher and I.~Vajda.
\newblock {A new class of metric divergences on probability spaces and and its
  statistical applications}.
\newblock {\em Ann. Inst. Statist. Math.}, 55:639--653, 2003.

\bibitem{Schoenberg}
I.~J. Schoenberg.
\newblock Metric spaces and completely monotone functions.
\newblock {\em Annals of Mathematics}, 39(4):811--841, 1938.

\bibitem{scholtes2013characterisation}
Sebastian Scholtes.
\newblock A characterisation of inner product spaces by the maximal
  circumradius of spheres.
\newblock {\em Archiv der Mathematik}, 101(3):235--241, 2013.

\bibitem{Topsoe2000}
F.~Topsoe.
\newblock Some inequalities for information divergence and related measures of
  discrimination.
\newblock {\em IEEE Transactions on Information Theory}, 46(4):1602--1609, Jul
  2000.

\bibitem{topsoe2003jenson}
Flemming Tops{\o}e.
\newblock Jenson-shannon divergence and norm-based measures of discrimination
  and variation.
\newblock {\em preprint}, 2003.

\bibitem{toth2004handbook}
C.D. Toth, J.~O'Rourke, and J.E. Goodman.
\newblock {\em Handbook of Discrete and Computational Geometry, Second
  Edition}.
\newblock Discrete and Combinatorial Mathematics Series. CRC Press, 2004.

\bibitem{GHT}
Jeffrey~K. Uhlmann.
\newblock Satisfying general proximity / similarity queries with metric trees.
\newblock {\em Information Processing Letters}, 40(4):175 -- 179, 1991.

\bibitem{wilson1932relation}
Wallace~A Wilson.
\newblock A relation between metric and euclidean spaces.
\newblock {\em American Journal of Mathematics}, 54(3):505--517, 1932.

\bibitem{zezula2006similarity}
Pavel Zezula, Giuseppe Amato, Vlastislav Dohnal, and Michal Batko.
\newblock {\em Similarity search: the metric space approach}, volume~32 of {\em
  Advances in Database Systems}.
\newblock Springer, 2006.

\end{thebibliography}

\newpage
\section{Appendices}

\appendix

\section{Algebraic Proof of Weakness}
\label{section_appendix_proof_1}
Here we prove that the Hilbert Exclusion Condition is weaker than the Hyperbolic Exclusion Condition. The intuition behind this is  clear  from the geometric derivation but the algebraic proof is straightforward.

We require to prove that
\[\frac{d(q,p_1)^2 - d(q,p_2)^2}{2\,d(p_1,p_2)} > t\]
is a weaker condition than
\[\frac{d(q,p_1) - d(q,p_2 )}{2} > t\]
for which it is sufficient to show that
\[\frac{d(q,p_1)^2 - d(q,p_2)^2}{2\,d(p_1,p_2)} \ge\frac{d(q,p_1) - d(q,p_2 )}{2}\]
Using the triangle inequality property on $q,p_1$ and $p_2$, this requirement can be stated as
\[\frac{a^2 - b^2}{2\,c} \ge\frac{a- b}{2} \quad,\quad c \le a + b\]
and so
\[\frac{(a+b)(a - b)}{2\,c} \ge \frac{a- b}{2}\]
 which is clear when $c \le a + b$.
%

This proof also neatly demonstrates the fact that the conditions are equivalent only if the query point is colinear with the two pivots $p_1$ and $p_2$; in all other cases, the Hilbert Exclusion Condition is strictly weaker.

\section{IDIMs and Query Thresholds}
\label{section_appendix_1}

\begin{table}[h]
\label{table_idims_thresholds}
\begin{center}
\caption{Intrinsic Dimensionality and Thresholds for Experimental  Spaces}{
{\renewcommand{\arraystretch}{1.16}
\renewcommand{\tabcolsep}{0.15cm}
\begin{tabular}{|l||c|c|c|c|c|c|c|}
\hline
Space	&IDIM	&$t_1$	&$t_2$	&$t_4$	&$t_8$	&$t_{16}$	&$t_{32}$\\
\hline\hline

euc\_6	&7.698	&0.076	&0.085	&0.095	&0.107	&0.120	&0.135\\
\hline
euc\_8	&10.40	&0.149	&0.162	&0.177	&0.193	&0.211	&0.230\\
\hline
euc\_10	&13.36	&0.228	&0.245	&0.262	&0.281	&0.301	&0.323\\
\hline
euc\_12	&16.23	&0.308	&0.327	&0.346	&0.367	&0.388	&0.412\\
\hline
euc\_14	&19.13	&0.386	&0.406	&0.426	&0.448	&0.471	&0.495\\
\hline
jsd\_6	&5.162	&0.022	&0.026	&0.030	&0.035	&0.040	&0.046\\
\hline
jsd\_8	&7.273	&0.045	&0.051	&0.057	&0.064	&0.071	&0.078\\
\hline
jsd\_10	&9.486	&0.067	&0.073	&0.079	&0.086	&0.094	&0.102\\
\hline
jsd\_12	&11.51	&0.084	&0.091	&0.099	&0.107	&0.114	&0.122\\
\hline
jsd\_14	&13.69	&0.103	&0.111	&0.118	&0.126	&0.133	&0.141\\
\hline
tri\_6	&5.754	&0.025	&0.030	&0.035	&0.041	&0.047	&0.055\\
\hline
tri\_8	&8.181	&0.053	&0.060	&0.068	&0.075	&0.083	&0.091\\
\hline
tri\_10	&10.46	&0.078	&0.086	&0.093	&0.101	&0.110	&0.119\\
\hline
tri\_12	&13.02	&0.098	&0.106	&0.116	&0.125	&0.133	&0.142\\
\hline
tri\_14	&15.60	&0.120	&0.129	&0.137	&0.146	&0.155	&0.164\\
\hline

\hline
\end{tabular}}}
\end{center}
\end{table}%

\section{Exclusion Power Results}
\label{section_appendix_exc_pow}

\label{table_power_test_results}
\begin{table}[h]
\begin{center}
\caption{Exclusion Power results for various metrics, spaces and thresholds.}{
{\renewcommand{\arraystretch}{1.3}
\renewcommand{\tabcolsep}{0.15cm}
\begin{tabular}{|c|c||c|c|c||c|c|c||c|c|c|}
\hline
	&&\multicolumn{3}{|c|}{Hyperbolic}	&\multicolumn{3}{|c|}{Hilbert}		&\multicolumn{3}{|c|}{Pivot}						\\
	\hline
	Data Set&IDIM&$t_1$&$t_4$&$t_{16}$&$t_1$&$t_4$&$t_{16}$&$t_1$&$t_4$&$t_{16}$\\
	\hline\hline


euc\_6	&7.64	&59.8	&50.8	&40.7	&80.5	&75.6	&69.4	&74.4	&68.1	&60.4	\\
\hline
euc\_8	&10.5	&31.4	&23.3	&15.8	&62.1	&55.6	&48.3	&51.8	&44.2	&36.0	\\
\hline
euc\_10	&13.3	&12.2	&7.6	&4.3	&44.3	&37.7	&30.8	&31.9	&25.1	&18.7	\\
\hline
euc\_12	&16.1	&3.8	&2.0	&0.9	&29.5	&23.8	&18.4	&17.4	&12.7	&8.6	\\
\hline
euc\_14	&19.0	&0.9	&0.4	&0.2	&18.5	&14.2	&10.3	&8.8	&6.0	&3.8	\\
\hline \hline
jsd\_6	&5.15	&66.1	&54.9	&42.9	&83.8	&77.8	&70.7	&82.4	&75.8	&68.0	\\
\hline
jsd\_8	&7.26	&32.4	&21.7	&13.5	&62.8	&53.9	&45.2	&58.5	&48.8	&39.3	\\
\hline
jsd\_10	&9.39	&11.4	&6.3	&3.0	&42.6	&34.4	&26.4	&36.2	&27.7	&19.8	\\
\hline
jsd\_12	&11.4	&3.5	&1.4	&0.5	&27.4	&19.6	&13.5	&20.8	&13.6	&8.5	\\
\hline
jsd\_14	&13.7	&0.6	&0.2	&0.1	&14.4	&9.5	&6.0	&9.3	&5.4	&3.0	\\
\hline \hline
tri\_6	&5.76	&63.7	&51.9	&39.7	&82.3	&75.8	&68.2	&80.4	&73.1	&64.6	\\
\hline
tri\_8	&8.25	&27.9	&17.6	&10.3	&59.5	&50.1	&41.0	&54.2	&43.9	&34.2	\\
\hline
tri\_10	&10.6	&8.1	&4.1	&1.8	&38.0	&29.7	&21.8	&31.0	&22.8	&15.4	\\
\hline
tri\_12	&13.0	&1.9	&0.6	&0.2	&22.7	&15.3	&9.9	&16.2	&9.8	&5.7	\\
\hline
tri\_14	&15.5	&0.3	&0.1	&0.0	&10.8	&6.6	&3.8	&6.2	&3.3	&1.6	\\
\hline


	\hline
	\end{tabular}}}
\end{center}
\end{table}%

\begin{table}[h]
\label{table_query_results}
\begin{center}
\caption{Indexing Costs for General Hyperplane and Monotonous Hyperplane Tree: mean number of distance calculations per query as percentage of data size ($n = 10^6$).}{
\makebox[\textwidth]{
{\renewcommand{\arraystretch}{1.3}
\renewcommand{\tabcolsep}{0.15cm}
\begin{tabular}{|c||c|c|c||c|c|c||c|c|c||c|c|c|}
\hline
	&\multicolumn{6}{|c||}{Hyperbolic}	&\multicolumn{6}{|c|}{Hilbert}		\\
	\hline
	&\multicolumn{3}{|c||}{GHT}	&\multicolumn{3}{|c||}{MHT}&\multicolumn{3}{|c||}{GHT}	&\multicolumn{3}{|c|}{MHT}\\
	\hline
	Data Set&$t_1$&$t_4$&$t_{16}$&$t_1$&$t_4$&$t_{16}$&$t_1$&$t_4$&$t_{16}$&$t_1$&$t_4$&$t_{16}$\\
	\hline\hline


euc\_6	&0.06	&0.11	&0.20	&0.05	&0.08	&0.15	&0.05	&0.08	&0.14	&0.03	&0.05	&0.10\\
\hline
euc\_8	&0.30	&0.50	&0.84	&0.25	&0.41	&0.68	&0.18	&0.31	&0.55	&0.13	&0.22	&0.40\\
\hline
euc\_10	&1.19	&1.86	&2.91	&1.00	&1.54	&2.33	&0.68	&1.12	&1.87	&0.48	&0.80	&1.35\\
\hline
euc\_12	&3.87	&5.60	&7.97	&3.19	&4.48	&6.25	&2.25	&3.53	&5.48	&1.62	&2.54	&3.97\\
\hline
euc\_14	&9.92	&13.18	&17.26	&7.67	&10.06	&13.17	&6.25	&9.09	&13.02	&4.47	&6.57	&9.53\\
\hline
\hline
tri\_6	&0.05	&0.11	&0.21	&0.04	&0.08	&0.16	&0.04	&0.07	&0.15	&0.02	&0.05	&0.11\\
\hline
tri\_8	&0.40	&0.78	&1.41	&0.32	&0.62	&1.10	&0.23	&0.48	&0.92	&0.17	&0.35	&0.69\\
\hline
tri\_10	&1.95	&3.29	&5.37	&1.66	&2.73	&4.36	&1.11	&2.05	&3.71	&0.84	&1.57	&2.87\\
\hline
tri\_12	&6.10	&9.84	&14.49	&5.25	&8.24	&12.04	&3.74	&6.86	&11.27	&2.92	&5.43	&9.04\\
\hline
tri\_14	&16.63	&23.11	&30.57	&13.95	&19.45	&26.06	&12.02	&18.57	&26.52	&9.68	&15.24	&22.25\\
\hline
\hline
jsd\_6	&0.05	&0.10	&0.20	&0.04	&0.08	&0.15	&0.04	&0.07	&0.15	&0.02	&0.05	&0.11\\
\hline
jsd\_8	&0.32	&0.63	&1.15	&0.26	&0.51	&0.92	&0.20	&0.40	&0.78	&0.14	&0.29	&0.58\\
\hline
jsd\_10	&1.50	&2.58	&4.29	&1.35	&2.22	&3.61	&0.90	&1.64	&2.99	&0.68	&1.25	&2.31\\
\hline
jsd\_12	&4.67	&7.68	&11.47	&4.17	&6.62	&9.76	&2.84	&5.27	&8.71	&2.22	&4.15	&6.97\\
\hline
jsd\_14	&12.4	&17.67	&23.9	&10.77	&15.17	&20.57	&8.62	&13.62	&19.97	&6.94	&11.13	&16.69\\
\hline

	\hline
	\end{tabular}}}}
\end{center}

\end{table}%

\newpage
\end{document}